\newtheorem{proposition}{Proposition}
\newtheorem{lemma}{Lemma}
\def\B{L}
\def\G{R}
\def\cW{\mathcal{W}}
\def\cY{\mathcal{Y}}
\def\cZ{\mathcal{Z}}
\author{
\centerline{Eren \c Sa\c so\u glu\quad and\quad Lele Wang}

\thanks{E. \c Sa\c so\u glu is with the Department of Electrical
Engineering and Computer Sciences, University of California, Berkeley,
CA 94720 USA (email: eren@eecs.berkeley.edu).}
\thanks{L. Wang is with the Department of Electrical and Computer
Engineering, University of California, San Diego, La Jolla, CA 92093
USA (email: lew001@ucsd.edu).}
}
\title{Universal Polarization}
\begin{document}
\maketitle
\begin{abstract}
A method to polarize channels universally is introduced.  The method
is based on combining two distinct channels in each polarization step
as opposed to Ar\i kan's original method of combining identical
channels.  This creates an equal number of only two types of channels,
one of which becomes progressively better as the other becomes worse.
The locations of the good polarized channels are independent of the
underlying channel, guaranteeing universality.  Polarizing the good
channels further with Ar\i kan's method results in universal polar
codes of rate~$1/2$.  The method is generalized to construct codes of
arbitrary rates.

It is also shown that the \emph{less noisy} ordering of channels is
preserved under polarization, and thus a good polar code for a given
channel will perform well over a less noisy one.

\begin{keywords}
Universal polar codes, universal polarization, compound channels, less
noisy ordering.
\end{keywords}
\end{abstract}

\section{Introduction}

The compound channel models communication without perfect knowledge of
the physical channel.  The channel is assumed to belong to a certain
class, and a code needs to be designed to perform well over all
members of the class.   The problem is relevant from a practical
standpoint since one can rarely estimate the channel perfectly, and it
is undesirable for small variations in the channel to change the code
performance dramatically.

Let $\cW$ be a class of binary-input memoryless channels, and
let~$I(\cW)$ denote the symmetric compound capacity of~$\cW$, which is
the highest achievable rate over all $W\in\cW$ by codes with an equal
frequency of zeros and ones.  It is known~\cite{BBT1959} that
$$
I(\cW)=\inf_{W\in\cW} I(W)
$$
where~$I(W)$ denotes the symmetric-capacity of~$W$.  A code sequence
of rate~$R$ is said to be \emph{universal} if its error probability
vanishes over the class of all channels with $I(W)>R$.

A problem of practical interest is to design universal codes with low
encoding and decoding complexities.  To this end, Kudekar \emph{et.\
al.}~\cite{KRU2012} recently showed that spatially-coupled LDPC codes
are universal (for symmetric channels) under low-complexity
message-passing decoders.  Here, we investigate whether universality
can be attained by Ar\i kan's polarization methods~\cite{Arikan2009}.
As in~\cite{KRU2012}, we consider the setting where the channel is
unknown only to the transmitter.  This is an idealized version of the
practical scenario where the receiver may estimate the channel prior
to data transmission, e.g., through the use of training symbols.
Polar coding for this setting was first considered by Hassani
\emph{et.\ al.}~\cite{HKU2009}, who concluded that Ar\i kan's original
codes are not universal under successive cancellation (SC) decoding.
It is worth noting, however, that polar codes are universal under the
optimal but computationally unfeasible maximum likelihood
decoding~\cite[pp.~87--89]{Sasoglu2011}.  


There are cases in which designing a polar code for multiple channels
is easy.  The most prominent of these is the \emph{degraded} case: A
polar code tailor to a given channel will also perform well over all
upgraded versions of that channel~\cite{Korada2009}.  In Appendix~A,
we show that a similar statement holds for the more general class of
\emph{less noisy} comparable channels.  

Our aim here is to show a method to polarize channels universally.  We
will first discuss how to achieve rate $1/2$, and later show
constructions that achieve arbitrary rates.

\section{Method} 
\label{sec:method}

As in Ar{\i}kan's original method, we will polarize channels
recursively.  The construction will have two stages, which we call the
\emph{slow} polarization and the \emph{fast} polarization stages.
Slow polarization will create only two types of channels after each
recursion:  Almost half of the polarized channels will be of the first
type and become increasingly good, the other half will become
increasingly bad.  The indices of the good channels will be
independent of the underlying channel, and thus universality will be
attained at this stage.  However, this type of polarization will be
too slow to allow reliable SC decoding.  In order to improve
reliability, we will switch to Ar{\i}kan's regular (fast) polarization
method once sufficient universality is achieved. 

Given two binary-input memoryless channels $W\colon\{0,1\}\to\cY$ and
$V\colon\{0,1\}\to\cZ$, define the binary-input channels
$$
(W,V)^-(y,z\mid x)
	=\sum_{u\in\{0,1\}} 
	\tfrac12 W(y\mid u+x)V(z\mid u)
$$
and
$$
(W,V)^+(y,z,u\mid x)
	=\tfrac12 W(y\mid u+x)V(z\mid x).
$$
Note that if $W=V$, then these are equivalent to the regular polarized
channels $W^-$ and $W^+$ in~\cite{Arikan2009}.  We will let $\B_n$ and
$\G_n$ denote the two channels that will emerge in the $n$th level of
slow polarization.  These are defined recursively through 
\begin{align}
\notag
\B_0&=\G_0=W\\
\label{eqn:slow-def}
\begin{split}
\B_{n+1}&=(\B_n,\G_n)^-\\
\G_{n+1}&=(\B_n,\G_n)^+
\end{split}
\qquad n=0,1,\dotsc
\end{align}
Observe that each recursion except the first combines two different
channels to produce the channels of the next level.  This is in
contrast with the original polarization method, which combines
identical channels to create $2^n$ polarized channels at the $n$th
recursion,
\begin{align*}
\begin{split}
W^{\bf s-}&=(W^{\bf s},W^{\bf s})^-\\
W^{\bf s+}&=(W^{\bf s},W^{\bf s})^+
\end{split},
\quad {\bf s}\in\{-,+\}^{n-1}.
\end{align*}
It is readily seen that for all $n$ we have
$$
I(\B_n)+I(\G_n)=2I(W).
$$ 
Standard arguments also show that $I(\B_n)$ is decreasing and
$I(\G_n)$ is increasing:
\begin{align*}
I(\B_{n+1})\le I(\B_n)\le I(\G_n)\le I(\G_{n+1}).
\end{align*}
Since both $I(\B_n)$ and $I(\G_n)$ are monotone and bounded by
$0$~and~$1$, they have $[0,1]$-valued limits, which we respectively
call $I(\B_\infty)$ and $I(\G_\infty)$.  Further, it follows
from~\cite[Lemma~2.1]{Sasoglu2011} that the inequalities above are
strict unless $I(\B_n)\in\{0,1\}$ or $I(\G_n)\in\{0,1\}$.   This
implies the following polarization result.

\begin{proposition}
\label{prop:slow}
\hfill\phantom{x}\\
\vspace{-2ex}
\begin{itemize}
\item[(i)]
If $I(W)\ge 1/2$, then 
\begin{align*}
I(\B_\infty)=2I(W)-1,
\quad
I(\G_\infty)=1.
\end{align*}
\item[(ii)]
If $I(W)\le 1/2$, then 
\begin{align*}
I(\B_\infty)=0,
\quad
I(\G_\infty)=2I(W).
\end{align*}
\end{itemize}
\end{proposition}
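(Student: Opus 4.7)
The plan is to use the monotonicity, boundedness, and conservation law already established in the excerpt to obtain convergent limits $I(\B_\infty)$ and $I(\G_\infty)$, then force at least one of them onto the boundary of $[0,1]$ via the strict-inequality statement cited from \cite[Lemma~2.1]{Sasoglu2011}, and finally read off the claimed values by splitting on whether $I(W)\ge 1/2$ or $I(W)\le 1/2$.

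First, since $I(\B_n)$ is nonincreasing, $I(\G_n)$ is nondecreasing, and both are confined to $[0,1]$, the limits $I(\B_\infty)$ and $I(\G_\infty)$ exist. Passing $I(\B_n)+I(\G_n)=2I(W)$ to the limit yields $I(\B_\infty)+I(\G_\infty)=2I(W)$, and the chain $I(\B_n)\le I(\G_n)$ is preserved in the limit, so $I(\B_\infty)\le I(\G_\infty)$.

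Second, I would show that $I(\B_\infty)\in\{0,1\}$ or $I(\G_\infty)\in\{0,1\}$. Convergence of the two scalar sequences forces the one-step differences $I(\B_n)-I(\B_{n+1})$ and $I(\G_{n+1})-I(\G_n)$ to tend to zero, while the cited strict inequality asserts that these differences are strictly positive whenever $I(\B_n),I(\G_n)\in(0,1)$. To upgrade ``strict at every finite $n$'' to a statement about the limits, I would extract a subsequential limit $(\B^*,\G^*)$ of the channel pairs $(\B_n,\G_n)$ using a compact representation of binary-input memoryless channels (for example, the posterior distribution of the input given the output) and use continuity of $(\cdot,\cdot)^-$ at the level of symmetric capacity to obtain $I(\B^*)=I((\B^*,\G^*)^-)$. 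Applied to $(\B^*,\G^*)$, the strict inequality then excludes the case that both $I(\B^*)$ and $I(\G^*)$ lie in $(0,1)$; and because the underlying scalar sequences already converge, $I(\B^*)=I(\B_\infty)$ and $I(\G^*)=I(\G_\infty)$, giving the desired conclusion.

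Third, the case analysis is routine. The limit ordering $I(\B_\infty)\le I(\G_\infty)$ excludes $I(\G_\infty)=0$ unless $I(W)=0$, and excludes $I(\B_\infty)=1$ unless $I(W)=1$. For $I(W)\ge 1/2$ the option $I(\B_\infty)=0$ would give $I(\G_\infty)=2I(W)\ge 1$, forcing $I(\G_\infty)=1$ and, via conservation, $I(\B_\infty)=2I(W)-1$, which is (i). Symmetrically, for $I(W)\le 1/2$ the option $I(\G_\infty)=1$ would give $I(\B_\infty)=2I(W)-1\le 0$, forcing $I(\B_\infty)=0$ and $I(\G_\infty)=2I(W)$, which is (ii). The step I expect to be the main obstacle is the second one: promoting the per-step strict inequality to a limit statement is not a direct consequence of monotonicity. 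The cleanest routes are either (a) establishing a quantitative continuous lower bound of the form $I(\B_n)-I(\B_{n+1})\ge \phi(I(\B_n),I(\G_n))$ with $\phi>0$ on $(0,1)^2$, or (b) the subsequential-limit argument sketched above; both reduce to \cite[Lemma~2.1]{Sasoglu2011} but require care in choosing a topology on binary-input channels under which the $(\cdot,\cdot)^{\pm}$ constructions and $I(\cdot)$ are jointly continuous.
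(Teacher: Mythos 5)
Your proposal follows essentially the same route as the paper's (very terse) argument: monotone bounded limits, the conservation law $I(\B_n)+I(\G_n)=2I(W)$, the strictness statement from \cite[Lemma~2.1]{Sasoglu2011}, and the concluding case analysis, with the only step the paper leaves implicit being exactly the one you flag, namely upgrading per-step strictness to a statement about the limits. Your route (a) is the intended way to close that step, since the cited Lemma~2.1 is in fact quantitative (the BEC/BSC extremal bounds, used again in Proposition~\ref{prop:bounds}), e.g.\ $H(\G_n)-H(\G_{n+1})\ge h\big(h^{-1}(H(\B_n))\ast h^{-1}(H(\G_n))\big)-H(\B_n)$, a continuous function that is strictly positive whenever both entropies lie in $(0,1)$, so convergence of $H(\G_n)$ forces one of the limits to be extremal; the heavier compactness argument in your route (b) is not needed.
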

We now describe a transform that recursively produces the channels
$\B_n$~and~$\G_n$.  This is best done graphically; the claims will be
evident from the figures.  Note first that $\B_1$~and~$\G_1$ are
identical to $W^-$~and~$W^+$, and thus can be obtained in the regular
manner (Figure~\ref{fig:basic}).  In order to create $\B_2$~and~$\G_2$
from these, one can take two independent $(\B_1,\G_1)$ pairs, and
combine the $\B_1$ from one pair with the $\G_1$ from the other, as in
Figure~\ref{fig:2-short}.  Inspecting the figure, one may be tempted
to combine the second $\B_1$~and~$\G_1$ to obtain another
$(\B_2,\G_2)$ pair, but some thought reveals that doing so fails to create the
desired effect.  

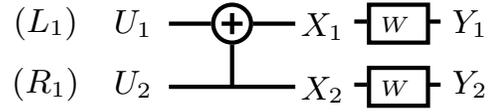
\begin{figure}[t]
\centerline{
\scalebox{1.4} 
{
\begin{pspicture}(-1.2,-0.5525)(4.6090627,0.5459375)
\psline[linewidth=0.04cm](0.87,0.3125)(1.27,0.3125)
\pscircle[linewidth=0.04,dimen=outer](1.47,0.3125){0.2}
\psline[linewidth=0.04cm](1.67,0.3125)(2.07,0.3125)
\psline[linewidth=0.04cm](1.47,0.4125)(1.47,0.2125)
\psline[linewidth=0.04cm](1.37,0.3125)(1.57,0.3125)
\psline[linewidth=0.04cm](1.47,0.1125)(1.47,-0.2875)
\psline[linewidth=0.04cm](0.87,-0.2875)(2.07,-0.2875)
\psframe[linewidth=0.04,dimen=outer](3.35,0.5125)(2.75,0.1125)
\psline[linewidth=0.04cm](2.63,0.3325)(2.75,0.3325)
\psline[linewidth=0.04cm](3.33,0.3325)(3.45,0.3325)
\psframe[linewidth=0.04,dimen=outer](3.35,-0.0875)(2.75,-0.4875)
\psline[linewidth=0.04cm](2.63,-0.2675)(2.75,-0.2675)
\psline[linewidth=0.04cm](3.33,-0.2675)(3.45,-0.2675)
{\footnotesize
\usefont{T1}{ptm}{m}{n}
\rput(3.73,0.3125){$Y_1$}
\rput(3.73,-0.2675){$Y_2$}
\rput(2.33,0.2825){$X_1$}
\rput(2.33,-0.3175){$X_2$}
\rput(0.53,0.3125){$U_1$}
\rput(0.53,-0.2675){$U_2$}
\rput(-0.3,0.3125){$(\B_1)$}
\rput(-0.3,-0.2675){$(\G_1)$}
}
{\tiny
\usefont{T1}{ptm}{m}{n}
\rput(3.02,0.3025){$W$}
\rput(3.02,-0.2975){$W$}
}
\end{pspicture} 
}}
\caption{
\label{fig:basic}
When $X_1$ and $X_2$ are uniform and independent, the channel $U_1\to
Y_1^2$ is equivalent to $\B_1=W^-$ and $U_2\to Y_1^2U_1$ is equivalent to
$\G_1=W^+$.}
\end{figure}
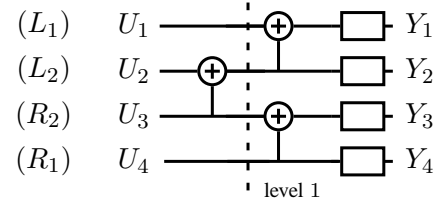
\begin{figure}[t]
\centerline{
\scalebox{1} 
{
\begin{pspicture}(0,-1.3664062)(6.519062,1.3464062)
\psframe[linewidth=0.04,dimen=outer](5.485469,1.2129687)(4.885469,0.81296873)
\psline[linewidth=0.04cm](4.7654686,1.0129688)(4.885469,1.0129688)
\psline[linewidth=0.04cm](5.465469,1.0129688)(5.585469,1.0129688)
\psframe[linewidth=0.04,dimen=outer](5.485469,0.61296874)(4.885469,0.21296875)
\psline[linewidth=0.04cm](4.7654686,0.41296875)(4.885469,0.41296875)
\psline[linewidth=0.04cm](5.465469,0.41296875)(5.585469,0.41296875)
\psline[linewidth=0.04cm](3.3654687,-0.18703125)(3.8654687,-0.18703125)
\pscircle[linewidth=0.04,dimen=outer](4.065469,-0.18703125){0.2}
\psline[linewidth=0.04cm](4.2654686,-0.18703125)(4.7654686,-0.18703125)
\psline[linewidth=0.04cm](4.065469,-0.08703125)(4.065469,-0.28703126)
\psline[linewidth=0.04cm](3.9654686,-0.18703125)(4.1654687,-0.18703125)
\psline[linewidth=0.04cm](4.065469,-0.38703126)(4.065469,-0.78703123)
\psline[linewidth=0.04cm](3.3854687,-0.78703123)(4.8654685,-0.78703123)
\psline[linewidth=0.04cm](2.4854686,0.41296875)(2.9854686,0.41296875)
\pscircle[linewidth=0.04,dimen=outer](3.1854687,0.41296875){0.2}
\psline[linewidth=0.04cm](3.3854687,0.41296875)(3.8854687,0.41296875)
\psline[linewidth=0.04cm](3.1854687,0.5129688)(3.1854687,0.31296876)
\psline[linewidth=0.04cm](3.0854688,0.41296875)(3.2854688,0.41296875)
\psline[linewidth=0.04cm](3.1854687,0.21296875)(3.1854687,-0.18703125)
\psline[linewidth=0.04cm](2.4854686,-0.18703125)(3.8854687,-0.18703125)
\psline[linewidth=0.04cm](3.3654687,1.0129688)(3.8654687,1.0129688)
\pscircle[linewidth=0.04,dimen=outer](4.065469,1.0129688){0.2}
\psline[linewidth=0.04cm](4.2654686,1.0129688)(4.7654686,1.0129688)
\psline[linewidth=0.04cm](4.065469,1.1129688)(4.065469,0.91296875)
\psline[linewidth=0.04cm](3.9654686,1.0129688)(4.1654687,1.0129688)
\psline[linewidth=0.04cm](4.065469,0.81296873)(4.065469,0.41296875)
\psline[linewidth=0.04cm](3.3654687,0.41296875)(4.7654686,0.41296875)
\psline[linewidth=0.04cm](2.4854686,1.0129688)(3.8854687,1.0129688)
\psframe[linewidth=0.04,dimen=outer](5.485469,0.01296875)(4.885469,-0.38703126)
\psline[linewidth=0.04cm](4.7654686,-0.18703125)(4.885469,-0.18703125)
\psline[linewidth=0.04cm](5.465469,-0.18703125)(5.585469,-0.18703125)
\psframe[linewidth=0.04,dimen=outer](5.485469,-0.58703125)(4.885469,-0.9870312)
\psline[linewidth=0.04cm](4.7654686,-0.78703123)(4.885469,-0.78703123)
\psline[linewidth=0.04cm](5.465469,-0.78703123)(5.585469,-0.78703123)
\usefont{T1}{ppl}{m}{n}
\rput(5.944531,1.0229688){$Y_1$}
\usefont{T1}{ppl}{m}{n}
\rput(5.944531,0.42296875){$Y_2$}
\usefont{T1}{ppl}{m}{n}
\rput(5.944531,-0.17703125){$Y_3$}
\usefont{T1}{ppl}{m}{n}
\rput(5.944531,-0.77703124){$Y_4$}
\usefont{T1}{ppl}{m}{n}
\rput(2.1445312,1.0229688){$U_1$}
\usefont{T1}{ppl}{m}{n}
\rput(2.1445312,0.42296875){$U_2$}
\usefont{T1}{ppl}{m}{n}
\rput(2.1445312,-0.17703125){$U_3$}
\usefont{T1}{ppl}{m}{n}
\rput(2.1445312,-0.77703124){$U_4$}
\usefont{T1}{ppl}{m}{n}
\rput(0.94453126,1.0229688){$(\B_1)$}
\usefont{T1}{ppl}{m}{n}
\rput(0.94453126,0.42296875){$(\B_2)$}
\usefont{T1}{ppl}{m}{n}
\rput(0.94453126,-0.17703125){$(\G_2)$}
\usefont{T1}{ppl}{m}{n}
\rput(0.94453126,-0.77703124){$(\G_1)$}
\psline[linewidth=0.04cm](2.5454688,-0.78703123)(3.9454687,-0.78703123)
\psline[linewidth=0.04cm,linestyle=dashed,dash=0.10cm 0.16cm](3.66,1.3264062)(3.66,-1.1735938)
{\scriptsize
\usefont{T1}{ptm}{m}{n}
\rput(4.2525,-1.1635938){level $1$}
}
\end{pspicture} 
}}
\caption{
\label{fig:2-short}
The channels $U_i\to Y_1^4U_1^{i-1}$ are equivalent to those inside
the parentheses.
}
\end{figure}

Instead, more $(\B_2,\G_2)$ pairs can be obtained by combining more
than two $(\B_1,\G_1)$ pairs in a chain.  This is shown in
Figure~\ref{fig:2-level},  where four $(\B_1,\G_1)$ pairs are chained.
The resulting transform creates three~$(\B_2,\G_2)$ pairs.  One can
more generally chain~$K$ $(\B_1,\G_1)$ pairs to produce~$K-1$
$(\B_2,\G_2)$ pairs.  Thus, the fraction of $(\B_2,\G_2)$ pairs can be
made as close to~$1$ as desired by taking $K$ sufficiently large.
Observe also that the channels $U_i\to Y_1^{2K}U_1^{i-1}$ obtained by
such a chain are equivalent to $U_i\to Y_{i-2}^{i+2}U_{i-2}^{i-1}$.
That is, not all channel outputs are relevant to~$U_i$. 

\begin{figure}[t]
\centerline{
\scalebox{1} 
{
\begin{pspicture}(0,-2.7664063)(6.519062,2.7464063)
\psframe[linewidth=0.04,dimen=outer](5.5109377,2.6129687)(4.910938,2.2129688)
\psline[linewidth=0.04cm](4.7909374,2.4129686)(4.910938,2.4129686)
\psline[linewidth=0.04cm](5.4909377,2.4129686)(5.6109376,2.4129686)
\psframe[linewidth=0.04,dimen=outer](5.5109377,2.0129688)(4.910938,1.6129688)
\psline[linewidth=0.04cm](4.7909374,1.8129687)(4.910938,1.8129687)
\psline[linewidth=0.04cm](5.4909377,1.8129687)(5.6109376,1.8129687)
\psline[linewidth=0.04cm](2.5109375,-0.58703125)(3.0109375,-0.58703125)
\pscircle[linewidth=0.04,dimen=outer](3.2109375,-0.58703125){0.2}
\psline[linewidth=0.04cm](3.4109375,-0.58703125)(3.9109375,-0.58703125)
\psline[linewidth=0.04cm](3.2109375,-0.48703125)(3.2109375,-0.68703127)
\psline[linewidth=0.04cm](3.1109376,-0.58703125)(3.3109376,-0.58703125)
\psline[linewidth=0.04cm](3.2109375,-0.78703123)(3.2109375,-1.1870313)
\psline[linewidth=0.04cm](2.5109375,-1.1870313)(3.9109375,-1.1870313)
\psline[linewidth=0.04cm](3.3909373,0.01296875)(3.8909373,0.01296875)
\pscircle[linewidth=0.04,dimen=outer](4.0909376,0.01296875){0.2}
\psline[linewidth=0.04cm](4.2909374,0.01296875)(4.7909374,0.01296875)
\psline[linewidth=0.04cm](4.0909376,0.11296875)(4.0909376,-0.08703125)
\psline[linewidth=0.04cm](3.9909372,0.01296875)(4.1909375,0.01296875)
\psline[linewidth=0.04cm](4.0909376,-0.18703125)(4.0909376,-0.58703125)
\psline[linewidth=0.04cm](3.3909373,-0.58703125)(4.7909374,-0.58703125)
\psline[linewidth=0.04cm](3.3909373,1.2129687)(3.8909373,1.2129687)
\pscircle[linewidth=0.04,dimen=outer](4.0909376,1.2129687){0.2}
\psline[linewidth=0.04cm](4.2909374,1.2129687)(4.7909374,1.2129687)
\psline[linewidth=0.04cm](4.0909376,1.3129687)(4.0909376,1.1129688)
\psline[linewidth=0.04cm](3.9909372,1.2129687)(4.1909375,1.2129687)
\psline[linewidth=0.04cm](4.0909376,1.0129688)(4.0909376,0.61296874)
\psline[linewidth=0.04cm](3.4109375,0.61296874)(4.8909373,0.61296874)
\psline[linewidth=0.04cm](3.3909373,-1.1870313)(3.8909373,-1.1870313)
\pscircle[linewidth=0.04,dimen=outer](4.0909376,-1.1870313){0.2}
\psline[linewidth=0.04cm](4.2909374,-1.1870313)(4.7909374,-1.1870313)
\psline[linewidth=0.04cm](4.0909376,-1.0870312)(4.0909376,-1.2870313)
\psline[linewidth=0.04cm](3.9909372,-1.1870313)(4.1909375,-1.1870313)
\psline[linewidth=0.04cm](4.0909376,-1.3870312)(4.0909376,-1.7870313)
\psline[linewidth=0.04cm](3.3909373,-1.7870313)(4.7909374,-1.7870313)
\psline[linewidth=0.04cm](2.5109375,0.61296874)(3.0109375,0.61296874)
\pscircle[linewidth=0.04,dimen=outer](3.2109375,0.61296874){0.2}
\psline[linewidth=0.04cm](3.2109375,0.71296877)(3.2109375,0.5129688)
\psline[linewidth=0.04cm](3.1109376,0.61296874)(3.3109376,0.61296874)
\psline[linewidth=0.04cm](3.2109375,0.41296875)(3.2109375,0.01296875)
\psline[linewidth=0.04cm](2.5109375,0.01296875)(3.9109375,0.01296875)
\psline[linewidth=0.04cm](2.5109375,1.8129687)(3.0109375,1.8129687)
\pscircle[linewidth=0.04,dimen=outer](3.2109375,1.8129687){0.2}
\psline[linewidth=0.04cm](3.4109375,1.8129687)(3.9109375,1.8129687)
\psline[linewidth=0.04cm](3.2109375,1.9129688)(3.2109375,1.7129687)
\psline[linewidth=0.04cm](3.1109376,1.8129687)(3.3109376,1.8129687)
\psline[linewidth=0.04cm](3.2109375,1.6129688)(3.2109375,1.2129687)
\psline[linewidth=0.04cm](2.5109375,1.2129687)(3.9109375,1.2129687)
\psline[linewidth=0.04cm](3.3909373,2.4129686)(3.8909373,2.4129686)
\pscircle[linewidth=0.04,dimen=outer](4.0909376,2.4129686){0.2}
\psline[linewidth=0.04cm](4.2909374,2.4129686)(4.7909374,2.4129686)
\psline[linewidth=0.04cm](4.0909376,2.5129688)(4.0909376,2.3129687)
\psline[linewidth=0.04cm](3.9909372,2.4129686)(4.1909375,2.4129686)
\psline[linewidth=0.04cm](4.0909376,2.2129688)(4.0909376,1.8129687)
\psline[linewidth=0.04cm](3.3909373,1.8129687)(4.7909374,1.8129687)
\psline[linewidth=0.04cm](2.5109375,2.4129686)(3.9109375,2.4129686)
\psline[linewidth=0.04cm](3.9109375,-1.7870313)(2.5109375,-1.7870313)
\psframe[linewidth=0.04,dimen=outer](5.5109377,1.4129688)(4.910938,1.0129688)
\psline[linewidth=0.04cm](4.7909374,1.2129687)(4.910938,1.2129687)
\psline[linewidth=0.04cm](5.4909377,1.2129687)(5.6109376,1.2129687)
\psframe[linewidth=0.04,dimen=outer](5.5109377,0.81296873)(4.910938,0.41296875)
\psline[linewidth=0.04cm](4.7909374,0.61296874)(4.910938,0.61296874)
\psline[linewidth=0.04cm](5.4909377,0.61296874)(5.6109376,0.61296874)
\psframe[linewidth=0.04,dimen=outer](5.5109377,0.21296875)(4.910938,-0.18703125)
\psline[linewidth=0.04cm](4.7909374,0.01296875)(4.910938,0.01296875)
\psline[linewidth=0.04cm](5.4909377,0.01296875)(5.6109376,0.01296875)
\psframe[linewidth=0.04,dimen=outer](5.5109377,-0.38703126)(4.910938,-0.78703123)
\psline[linewidth=0.04cm](4.7909374,-0.58703125)(4.910938,-0.58703125)
\psline[linewidth=0.04cm](5.4909377,-0.58703125)(5.6109376,-0.58703125)
\psframe[linewidth=0.04,dimen=outer](5.5109377,-0.9870312)(4.910938,-1.3870312)
\psline[linewidth=0.04cm](4.7909374,-1.1870313)(4.910938,-1.1870313)
\psline[linewidth=0.04cm](5.4909377,-1.1870313)(5.6109376,-1.1870313)
\psframe[linewidth=0.04,dimen=outer](5.5109377,-1.5870312)(4.910938,-1.9870312)
\psline[linewidth=0.04cm](4.7909374,-1.7870313)(4.910938,-1.7870313)
\psline[linewidth=0.04cm](5.4909377,-1.7870313)(5.6109376,-1.7870313)
\usefont{T1}{ppl}{m}{n}
\rput(5.944531,2.4229689){$Y_1$}
\usefont{T1}{ppl}{m}{n}
\rput(5.944531,1.8229687){$Y_2$}
\usefont{T1}{ppl}{m}{n}
\rput(5.944531,1.2229687){$Y_3$}
\usefont{T1}{ppl}{m}{n}
\rput(5.944531,0.62296873){$Y_4$}
\usefont{T1}{ppl}{m}{n}
\rput(5.944531,0.02296875){$Y_5$}
\usefont{T1}{ppl}{m}{n}
\rput(5.944531,-0.57703125){$Y_6$}
\usefont{T1}{ppl}{m}{n}
\rput(5.944531,-1.1770313){$Y_7$}
\usefont{T1}{ppl}{m}{n}
\rput(5.944531,-1.7770313){$Y_8$}
\usefont{T1}{ppl}{m}{n}
\rput(2.1445312,2.4229689){$U_1$}
\usefont{T1}{ppl}{m}{n}
\rput(2.1445312,1.8229687){$U_2$}
\usefont{T1}{ppl}{m}{n}
\rput(2.1445312,1.2229687){$U_3$}
\usefont{T1}{ppl}{m}{n}
\rput(2.1445312,0.62296873){$U_4$}
\usefont{T1}{ppl}{m}{n}
\rput(2.1445312,0.02296875){$U_5$}
\usefont{T1}{ppl}{m}{n}
\rput(2.1445312,-0.57703125){$U_6$}
\usefont{T1}{ppl}{m}{n}
\rput(2.1445312,-1.1770313){$U_7$}
\usefont{T1}{ppl}{m}{n}
\rput(2.1445312,-1.7770313){$U_8$}
\usefont{T1}{ppl}{m}{n}
\rput(0.94453126,2.4229689){$(\B_1)$}
\usefont{T1}{ppl}{m}{n}
\rput(0.94453126,1.8229687){$(\B_2)$}
\usefont{T1}{ppl}{m}{n}
\rput(0.94453126,1.2229687){$(\G_2)$}
\usefont{T1}{ppl}{m}{n}
\rput(0.94453126,0.62296873){$(\B_2)$}
\usefont{T1}{ppl}{m}{n}
\rput(0.94453126,0.02296875){$(\G_2)$}
\usefont{T1}{ppl}{m}{n}
\rput(0.94453126,-0.57703125){$(\B_2)$}
\usefont{T1}{ppl}{m}{n}
\rput(0.94453126,-1.1770313){$(\G_2)$}
\usefont{T1}{ppl}{m}{n}
\rput(0.94453126,-1.7770313){$(\G_1)$}
\psline[linewidth=0.04cm,linestyle=dashed,dash=0.10cm 0.16cm](3.66,2.7264063)(3.66,-2.2735937)
{\scriptsize
\usefont{T1}{ptm}{m}{n}
\rput(4.2525,-2.2635939){level $1$}
}
\end{pspicture} 
}}
\caption{
\label{fig:2-level}
Four pairs of level-$1$ channels are chained to create six level-$2$
and two level-$1$ channels.  The channels $U_i\to Y_1^8U_1^{i-1}$ are
equivalent to the ones on the left.
}
\end{figure}
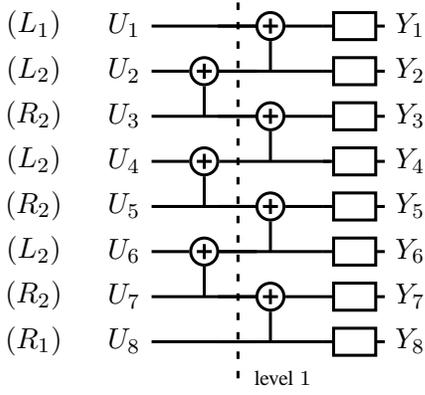

There are several ways to continue this construction to attain further
levels of polarization.  We describe here perhaps the simplest one,
where chaining as in Figure~\ref{fig:2-level} is used only in the
second recursion.  Each subsequent recursion combines only two blocks.
The third level of this construction with $K=4$ is shown in
Figure~\ref{fig:3-level}.  Here, only the level-$2$ channels $\B_2$
and $\G_2$ are combined in the third recursion, $\B_1$ and $G_1$ are
not.  Further, the first $\B_2$ in the first block and the last $\G_2$
in the second are also left unconnected, in order to ensure that the
remaining channels polarize to the third level to produce $\B_3$ and
$\G_3$.  This idea is easily extended to further levels: To obtain
$\B_{n+1}$ and $\G_{n+1}$ in the $(n+1)$-th recursion, one only
combines the $\B_n$s from the first block with the $\G_n$s from the
second, and vice versa.  The first $\B_n$ from the first block and the
last $\G_n$ from the second block are left unconnected.  This is shown
in Figure~\ref{fig:n-level}.

\begin{figure}[t]
\centerline{\input{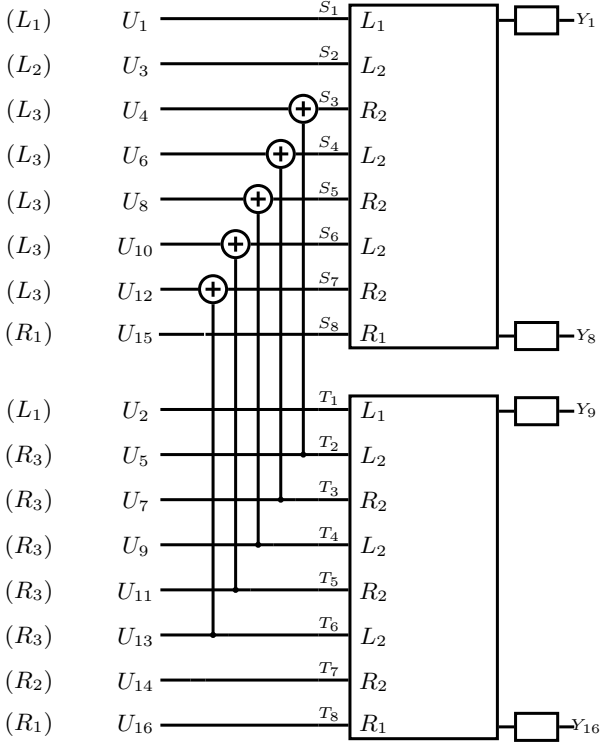}}
\caption{
\label{fig:3-level}
A $3$-level transform with $K=4$.  Each of the two blocks represents
the transform in Figure~\ref{fig:2-level}.  The channels written
inside the respective blocks correspond to $S_i\to Y_1^8S_1^{i-1}$ and
$T_i\to Y_9^{16}T_1^{i-1}$. If one labels $U_1$ to $U_{16}$ as above,
then the channels $U_i\to Y_1^{16}U_1^{i-1}$ are equivalent to the
ones on the left.
}
\end{figure}

\begin{figure}[t]
\centerline{
\scalebox{1} 
{
\begin{pspicture}(0,-6.1529684)(7.6760936,6.1929684)
\pscircle[linewidth=0.04,dimen=outer](3.15375,4.1098437){0.2}
\psline[linewidth=0.04cm](3.05375,4.1098437)(3.25375,4.1098437)
\psline[linewidth=0.04cm](3.15375,4.2098436)(3.15375,4.009844)
\psline[linewidth=0.04cm](1.401875,4.1242185)(2.95375,4.1098437)
\usefont{T1}{ptm}{m}{n}
\rput(7.320781,5.887031){\tiny $Y_1$}
\psline[linewidth=0.04cm](6.1753125,5.887031)(6.3753123,5.887031)
\psframe[linewidth=0.04,dimen=outer](6.9753127,6.087031)(6.3753123,5.687031)
\psline[linewidth=0.04cm](6.9753127,5.887031)(7.1753125,5.887031)
\psframe[linewidth=0.04,dimen=outer](6.201875,6.0442185)(3.801875,0.1242187)
\psline[linewidth=0.04cm](3.321875,4.1242185)(3.801875,4.1242185)
\psline[linewidth=0.04cm](3.801875,5.8842187)(3.481875,5.8842187)
\psline[linewidth=0.04cm](3.801875,5.0842185)(3.481875,5.0842185)
\psline[linewidth=0.04cm](3.801875,4.6042185)(3.481875,4.6042185)
\pscircle[linewidth=0.04,dimen=outer](2.83375,3.6298435){0.2}
\psline[linewidth=0.04cm](1.421875,3.6442187)(2.66,3.6529686)
\psline[linewidth=0.04cm](3.001875,3.6442187)(3.801875,3.6442187)
\pscircle[linewidth=0.04,dimen=outer](2.19375,2.0298438){0.2}
\psline[linewidth=0.04cm](1.44,2.0529685)(2.0337498,2.0498438)
\psline[linewidth=0.04cm](2.361875,2.0442188)(3.801875,2.0442188)
\pscircle[linewidth=0.04,dimen=outer](1.87375,1.5498435){0.2}
\psline[linewidth=0.04cm](1.401875,1.5642188)(1.7,1.5529686)
\psline[linewidth=0.04cm](2.041875,1.5642188)(3.801875,1.5642188)
\psline[linewidth=0.04cm](3.801875,1.0842187)(3.481875,1.0842187)
\psline[linewidth=0.04cm](3.801875,0.2842187)(3.481875,0.2842187)
\psframe[linewidth=0.04,dimen=outer](6.201875,-0.1957813)(3.801875,-6.1157813)
\psline[linewidth=0.04cm](3.801875,-0.3557813)(3.481875,-0.3557813)
\psline[linewidth=0.04cm](3.801875,-1.1557813)(3.481875,-1.1557813)
\psline[linewidth=0.04cm](3.801875,-1.6357813)(1.401875,-1.6357813)
\psline[linewidth=0.04cm](3.801875,-4.6757812)(3.481875,-4.6757812)
\psline[linewidth=0.04cm](3.801875,-5.9557815)(3.481875,-5.9557815)
\psline[linewidth=0.04cm](3.801875,-2.1157813)(1.401875,-2.1157813)
\psline[linewidth=0.04cm](3.801875,-3.7157812)(1.401875,-3.7157812)
\psline[linewidth=0.04cm](3.801875,-4.195781)(1.401875,-4.195781)
\psline[linewidth=0.04cm](3.161875,3.9242187)(3.161875,-1.6557813)
\psline[linewidth=0.04cm](2.841875,3.4442186)(2.841875,-2.1357813)
\psline[linewidth=0.04cm](2.201875,1.8642187)(2.201875,-3.7157812)
\psline[linewidth=0.04cm](1.881875,1.3842187)(1.881875,-4.195781)
\psline[linewidth=0.04cm](3.801875,-5.1557813)(3.481875,-5.1557813)
\psdots[dotsize=0.08](2.521875,3.1642187)
\psdots[dotsize=0.08](2.521875,2.8442187)
\psdots[dotsize=0.08](2.521875,2.5242188)
\psdots[dotsize=0.08](2.521875,-2.5957813)
\psdots[dotsize=0.08](2.521875,-2.9157813)
\psdots[dotsize=0.08](2.521875,-3.2357812)
\psdots[dotsize=0.06](3.641875,-5.315781)
\psdots[dotsize=0.06](3.641875,-5.4757814)
\psdots[dotsize=0.06](3.641875,-5.6357813)
\psdots[dotsize=0.06](3.641875,-0.5157813)
\psdots[dotsize=0.06](3.641875,-0.6757813)
\psdots[dotsize=0.06](3.641875,-0.8357813)
\psdots[dotsize=0.06](3.641875,5.724219)
\psdots[dotsize=0.06](3.641875,5.5642185)
\psdots[dotsize=0.06](3.641875,5.4042187)
\usefont{T1}{ptm}{m}{n}
\rput(7.420781,0.2870312){\tiny $Y_N$}
\psline[linewidth=0.04cm](6.1753125,0.2870312)(6.3753123,0.2870312)
\psframe[linewidth=0.04,dimen=outer](6.9753127,0.4870312)(6.3753123,0.0870312)
\psline[linewidth=0.04cm](6.9753127,0.2870312)(7.1753125,0.2870312)
\usefont{T1}{ptm}{m}{n}
\rput(7.550781,-0.3529688){\tiny $Y_{N+1}$}
\psline[linewidth=0.04cm](6.1753125,-0.3529688)(6.3753123,-0.3529688)
\psframe[linewidth=0.04,dimen=outer](6.9753127,-0.1529688)(6.3753123,-0.5529688)
\psline[linewidth=0.04cm](6.9753127,-0.3529688)(7.1753125,-0.3529688)
\usefont{T1}{ptm}{m}{n}
\rput(7.470781,-5.9529686){\tiny $Y_{2N}$}
\psline[linewidth=0.04cm](6.1753125,-5.9529686)(6.3753123,-5.9529686)
\psframe[linewidth=0.04,dimen=outer](6.9753127,-5.752969)(6.3753123,-6.1529684)
\psline[linewidth=0.04cm](6.9753127,-5.9529686)(7.1753125,-5.9529686)
\usefont{T1}{ptm}{m}{n}
\rput(4.084219,5.837031){\footnotesize $\B_1$}
\usefont{T1}{ptm}{m}{n}
\rput(4.244219,5.037031){\footnotesize $\B_{n-1}$}
\usefont{T1}{ptm}{m}{n}
\rput(4.084219,4.5570316){\footnotesize $\B_n$}
\usefont{T1}{ptm}{m}{n}
\rput(4.084219,4.0770316){\footnotesize $\G_n$}
\usefont{T1}{ptm}{m}{n}
\rput(4.084219,3.597031){\footnotesize $\B_n$}
\usefont{T1}{ptm}{m}{n}
\rput(4.084219,1.9970312){\footnotesize $\B_n$}
\usefont{T1}{ptm}{m}{n}
\rput(4.084219,1.5170312){\footnotesize $\G_n$}
\usefont{T1}{ptm}{m}{n}
\rput(4.244219,1.0370312){\footnotesize $\G_{n-1}$}
\usefont{T1}{ptm}{m}{n}
\rput(4.084219,0.3370312){\footnotesize $\G_1$}
\usefont{T1}{ptm}{m}{n}
\rput(0.88671875,4.0770316){\footnotesize $U_{2\ell+2}$}
\usefont{T1}{ptm}{m}{n}
\rput(0.88671875,3.597031){\footnotesize $U_{2\ell+4}$}
\usefont{T1}{ptm}{m}{n}
\rput(0.88671875,-1.6829689){\footnotesize $U_{2\ell+3}$}
\usefont{T1}{ptm}{m}{n}
\rput(0.88671875,-2.1629689){\footnotesize $U_{2\ell+5}$}
\usefont{T1}{ptm}{m}{n}
\rput(3.54625,6.037031){\scriptsize $S_1$}
\usefont{T1}{ptm}{m}{n}
\rput(3.54625,5.237031){\scriptsize $S_\ell$}
\usefont{T1}{ptm}{m}{n}
\rput(3.54625,0.4370311){\scriptsize $S_N$}
\usefont{T1}{ptm}{m}{n}
\rput(3.54625,-0.1735937){\scriptsize $T_1$}
\usefont{T1}{ptm}{m}{n}
\rput(3.54625,-0.9735937){\scriptsize $T_\ell$}
\usefont{T1}{ptm}{m}{n}
\rput(3.54625,-5.773594){\scriptsize $T_N$}
\usefont{T1}{ptm}{m}{n}
\rput(4.084219,-0.4029688){\footnotesize $\B_1$}
\usefont{T1}{ptm}{m}{n}
\rput(4.244219,-1.2029688){\footnotesize $\B_{n-1}$}
\usefont{T1}{ptm}{m}{n}
\rput(4.084219,-1.6829689){\footnotesize $\B_n$}
\usefont{T1}{ptm}{m}{n}
\rput(4.084219,-2.1629689){\footnotesize $\G_n$}
\usefont{T1}{ptm}{m}{n}
\rput(4.084219,-4.242969){\footnotesize $\B_n$}
\usefont{T1}{ptm}{m}{n}
\rput(4.084219,-3.7229688){\footnotesize $\G_n$}
\usefont{T1}{ptm}{m}{n}
\rput(4.084219,-4.7229686){\footnotesize $\G_n$}
\usefont{T1}{ptm}{m}{n}
\rput(4.244219,-5.2029686){\footnotesize $\G_{n-1}$}
\usefont{T1}{ptm}{m}{n}
\rput(4.084219,-5.902969){\footnotesize $\G_1$}
\psline[linewidth=0.04cm](2.73375,3.6298437)(2.93375,3.6298437)
\psline[linewidth=0.04cm](2.83375,3.7298436)(2.83375,3.5298438)
\psline[linewidth=0.04cm](2.09375,2.0298438)(2.29375,2.0298438)
\psline[linewidth=0.04cm](2.19375,2.1298437)(2.19375,1.9298437)
\psline[linewidth=0.04cm](1.7737501,1.5498437)(1.97375,1.5498437)
\psline[linewidth=0.04cm](1.87375,1.6498437)(1.87375,1.4498436)
\psdots[dotsize=0.09](1.881875,-4.195781)
\psdots[dotsize=0.09](2.201875,-3.7157812)
\psdots[dotsize=0.09](2.841875,-2.1157813)
\psdots[dotsize=0.09](3.161875,-1.6357813)
\end{pspicture} 
}}
\caption{
\label{fig:n-level}
The $(n+1)$-level construction.  Here, $\ell=2^{n-1}-1$.
}
\end{figure}
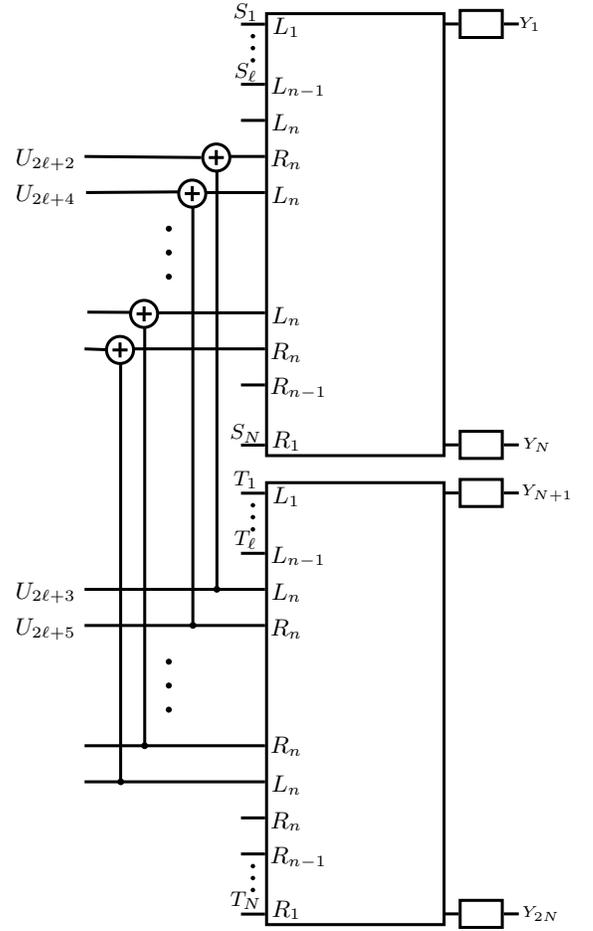

Recall that our initial goal was to ensure that all channels after the
$n$th recursion become either $\B_n$ or $\G_n$,  but the procedure
described above leaves some channels in lower levels of polarization.
The number of these less polarized channels in fact increases with
each recursion,  although fortunately, this loss is limited.  One can
indeed check that the blocklength is $N=2^{n-1}K$ after the $n$th
recursion, and the number of channels at level $1\le i\le n-1$ is
$2^{n-i}$.   Therefore the fraction of level-$n$ channels can be lower
bounded as 
\begin{align*} 
1-\frac{\sum_{i=1}^{n-1} 
	2^{n-i}}{2^{n-1}K}
	&\ge 1-\frac{2^n}{2^{n-1}K} 
	=1-\frac2K, 
\end{align*} 
which can be made arbitrarily close to~$1$ by picking a large~$K$. 

Observe that the construction described above is universal:  The
positions of the good channels (that is, $\G_n$s) after the
transformation is independent of the underlying channel $W$.  One may
therefore hope to use these channels to achieve rate $1/2$ over any
$W$ with $I(W)\ge1/2$.  Unfortunately, however, the speed of
polarization is too slow for an SC decoder to succeed. 
This is most easily seen by noting that the Bhattacharyya parameter of
$\G_{n+1}$ is given by
$$
Z(\G_{n+1})=Z(\G_n)Z(\B_n).
$$
Since $Z(\B_n)$ approaches a non-zero constant (in particular,~$1$ if
$I(W)=1/2$) as $n$ grows, the multiplicative improvement in $Z(\G_n)$
gradually slows down (to a halt if $I(W)=1/2$).  Compare this with
Ar\i kan's original method, in which identical channels are combined
in each step, and thus the improvement in the Bhattacharyya parameters
speeds up as channels become better.  The speed of universal
polarization can be bounded as follows:
\begin{proposition}
\label{prop:bounds}
Let $h\colon[0,1/2]\to[0,1]$ denote the binary entropy function, let
$a\ast b=a(1-b)+(1-a)b$, and define 
\begin{align*}
f(t,x)&=t(2x-t)\\
g(t,x)&=2x-h\big(h^{-1}(t)\ast h^{-1}(2x-t)\big),
\end{align*}
over $x\in[0,1]$ and
$t\in[\max\{0,2x-1\},x]$.  Finally define
\begin{align*}
F_0(x)&=G_0(x)=x\\
\begin{split}
F_n(x)&=f(F_{n-1}(x),x)\\
G_n(x)&=g(G_{n-1}(x),x)
\end{split}\phantom{i},\quad n=1,2,\dotsc.
\end{align*}
We have
$$
1-G_n(1-I(W))\le I(\G_n)\le 1-F_n(1-I(W)).
$$
\end{proposition}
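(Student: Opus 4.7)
The plan is to introduce $t_n := 1 - I(\G_n)$ and $x := 1 - I(W)$. The conservation identity $I(\B_n) + I(\G_n) = 2I(W)$ noted earlier gives $1 - I(\B_n) = 2x - t_n$, so $t_n \in [\max\{0, 2x-1\}, x]$ and the target bound is equivalent to $F_n(x) \le t_n \le G_n(x)$.

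The key ingredient is the extremes-of-information-combining bound for the $+$ operation: among symmetric binary-input channels $W_0, V_0$ of given capacities, the BEC pair maximizes and the BSC pair minimizes $I((W_0,V_0)^+)$, giving
\begin{multline*}
I(W_0) + I(V_0) - 1 + h\bigl(h^{-1}(1 - I(W_0)) \ast h^{-1}(1 - I(V_0))\bigr) \\
\le I((W_0,V_0)^+) \le 1 - (1 - I(W_0))(1 - I(V_0)).
\end{multline*}
I would invoke this as a well-known fact from the polar coding literature (it is essentially the same extremality that drives the original polarization-speed analysis). Specializing to $W_0 = \B_n, V_0 = \G_n$ and rearranging yields exactly the one-step sandwich $f(t_n, x) \le t_{n+1} \le g(t_n, x)$.

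With this in hand, I would close by induction on $n$. The base case $t_0 = x = F_0(x) = G_0(x)$ is immediate. For the inductive step, it suffices to show that $f(\cdot, x)$ and $g(\cdot, x)$ are non-decreasing on $[\max\{0, 2x-1\}, x]$ and preserve this interval. Monotonicity of $f$ follows from $\partial_t f = 2(x - t) \ge 0$ on the range, and interval invariance for $F_n$ reduces to the factoring $F_n(2x - F_n) - (2x - 1) = -(F_n - 1)(F_n - (2x - 1)) \ge 0$, which is non-negative throughout the range.

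The main obstacle is the monotonicity of $g$. Setting $p = h^{-1}(t)$ and $q = h^{-1}(2x - t)$, on the valid range one has $p \le h^{-1}(x) \le q \le 1/2$, so $p \ast q \in [0, 1/2]$. A direct differentiation gives $\partial_t(p \ast q) \propto \psi(q) - \psi(p)$, where $\psi(r) := (1 - 2r)\log\tfrac{1-r}{r}$; a short check shows that $\psi$ is non-increasing on $[0, 1/2]$, whence $\partial_t(p \ast q) \le 0$. Composing with $h$, which is increasing on $[0, 1/2]$, shows that $g(\cdot, x) = 2x - h(p \ast q)$ is non-decreasing, and the corresponding interval invariance for $G_n$ follows since $g(\max\{0, 2x-1\}, x) = \max\{0, 2x-1\}$ and $g(x, x) \le x$ (the latter reducing to $h(p \ast p) \ge h(p)$ on $[0, 1/2]$). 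This closes the induction.
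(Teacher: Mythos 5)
Your proposal is correct and takes essentially the same route as the paper: the BEC/BSC extremality of information combining for the $+$ operation yields the one-step sandwich $f(t_n,x)\le t_{n+1}\le g(t_n,x)$, and the induction is closed via monotonicity of $f(\cdot,x)$ and $g(\cdot,x)$, exactly as in the paper's proof (which defers the monotonicity of $g$ to its Appendix B, proved there via convexity of $h^{-1}$). Your alternative derivative computation using the monotonicity of $\psi(r)=(1-2r)\log\frac{1-r}{r}$, and your explicit interval-invariance checks, are valid minor variations that the paper leaves implicit.
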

\begin{proof}
Let $H(W)$ denote the entropy of $W$'s uniformly distributed input
conditioned on its output, that is, $H(W)=1-I(W)$.  The claim is
equivalent to
\begin{align}
\label{eqn:bounds}
F_n(H(W))\le H(\G_n)\le G_n(H(W)),
\end{align}
which holds trivially for $n=0$.  Suppose now that~\eqref{eqn:bounds}
holds for some $n\ge1$.  Recall that among all pairs of channels $V$
and $W$ with given entropies, $H((V,W)^+)$ is minimized when $V$ and
$W$ are both binary erasure channels (BECs) and maximized when both
are binary symmetric channels (BSCs)~\cite[Lemma~2.1]{Sasoglu2011}.
This implies that
\begin{align}
\label{eqn:bounds2}
f(H(\G_n),H(W))\le H(\G_{n+1})\le g(H(\G_n),H(W)).
\end{align}
On the other hand, $f(t,x)$ and $g(t,x)$ are increasing in~$t$, a
proof for the latter is given in Appendix~B.  It then follows
from~\eqref{eqn:bounds} that 
\begin{align*}
F_{n+1}(H(W))&=f(F_n(H(W)),H(W))\\
&\le f(H(\G_n),H(W))
\end{align*}
and
\begin{align*}
g(H(\G_n),H(W))&\le g(G_n(H(W)),H(W))\\
&=G_{n+1}(H(W)). 
\end{align*} 
Combining these with~\eqref{eqn:bounds2} implies the claim for $n+1$,
concluding the proof. 
\end{proof}
Observe that the above upper bound on $I(\G_n)$ is obtained by
replacing $\G_n$ and $\B_n$ with BECs with symmetric capacities
$I(\G_n)$ and $I(\B_n)$ respectively before each polarization step.
Similarly, the lower bound is obtained by replacing these channels
with BSCs.  Recall that the descendants of BECs remain BECs during
polarization, whereas those of BSCs do not remain BSCs.  This implies
that while the upper bound is achieved by the BEC, the lower bound is
loose.  Tables \ref{table:case1}~and~\ref{table:case2} list the bounds
for $I(W)=0.5$ and $I(W)=0.8$.

\begin{table}[h]
\centering
 \begin{tabular}{cll}
 \hline
  $n$ & lower bound & upper bound \\
 \hline
 0 & 0.5 &  0.5\\
 1 & 0.713 &  0.750\\
 2 & 0.771 &  0.812\\
 3 & 0.805&  0.847\\
 4 & 0.829&  0.870\\
 5 & 0.846&  0.887\\
 10 & 0.895 &  0.931\\
 20 & 0.932 &  0.960\\
 30 & 0.949 &  0.972\\
 40 & 0.958 &  0.978\\
 \hline
 \end{tabular}
 \caption{Bounds on $I(R_n)$ for $I(W) =
0.5$.}
\label{table:case1}
\end{table}

\begin{table}[h]
\centering
 \begin{tabular}{cll}
 \hline
  $n$ & lower bound & upper bound \\
 \hline
 0 & 0.8 &  0.8\\
 1 & 0.928 &  0.960\\
 2 & 0.957 &  0.986\\
 3 & 0.972 &  0.994\\
 4 & 0.981 &  0.997\\
 5 & 0.986 &  0.999\\
 10 & 0.996 &  0.999991\\
 15 & 0.9990 &  0.99999991\\
 20 & 0.9996 &  0.9999999991\\
 \hline
 \end{tabular}
 \caption{Bounds on $I(R_n)$ for $I(W) =
0.8$.}
\label{table:case2}
\end{table}
\subsection{Universal polar coding}

To obtain a good code, we can append Ar\i kan's fast (but not
universal) polarizing transform to the universal (but slow) polarizing
transform described above.  That is, once $n$ is sufficiently large so
that $I(\G_n)>1-\epsilon$ for all $W$, we may start polarizing $\G_n$
fast.  This can be done by taking $M=2^m$ copies of the slow
polarization transform and passing the $M$ copies of each $\G_n$
through the standard length-$M$ transform.  Inputs to the remaining
channels are frozen and the resulting code blocks are decoded in
succession. 

One may tailor the polar codes in the second stage to the channel that
is least degraded with respect to all channels with
$I(W)\ge1-\epsilon$.  How to find such channels is shown
in~\cite{LHU2013}.  A computationally simpler alternative is to find a
universal upper bound $Z(\G_n)\le\delta$ (as in
Proposition~\ref{prop:bounds}) and tailor the second-layer code to a
BEC with erasure probability~$\delta$.  This method is motivated by
the fact that among all channels with a fixed $Z(W)$, the BEC's
polarized descendants have the highest Bhattacharyya parameters, and
the latter can be computed in linear time~\cite{Arikan2009}.

\subsection{Rate}
Since~$I(\G_n)$ is close to~$1$, both approaches mentioned in the
paragraph above will induce a negligible rate loss in the fast
polarization stage.  Recall also that the loss in the slow
polarization stage is $O(1/K)$.  Hence the rate of the code can be
made as close to~$1/2$ as desired. 

\subsection{Error probability}
Recall that the reliabilities of the good channels after fast
polarization is $o(2^{-M^\beta})$ for all $\beta<1/2$~\cite{ArT2009},
and thus the block error probability of this code of length $NM$ is
upper bounded all $W\in\cW(1/2)$ by
$$
No(2^{-M^\beta}),
$$ 
which for fixed~$N$ vanishes as~$M$ grows.

\subsection{Complexity}

To estimate the decoding complexity, it is useful to explain the
decoding scheme in some detail:  The decoder begins by computing the
likelihood ratios for the frozen bits that see $\B_1,\dots,\B_{n-1}$,
followed by the first $\B_n$ in each block.  Then, the likelihood
ratio for the first $\G_n$ in each block is computed and passed to the
SC decoder for the first block of the fast polarization stage.  Once
this block is decoded, the bit values are passed back to the slow
polarization stage and are used to compute the likelihood ratios for
the next $\B_n$ and $\G_n$ in each block.  These steps are repeated
until all $\G_n$ blocks are decoded.  A straightforward computation
shows that the total complexity of this decoder is
$$
O(N)\kappa_{f}(M)+M\kappa_s(N),
$$
where $\kappa_f(M)$ and $\kappa_s(N)$ respectively are the decoding
complexities of the fast polar transform of length $M$ and a slow
polar transform of length $N$.  It is known~\cite{Arikan2009} that
$\kappa_f(M)=O(M\log M)$.  Now, observe that the slow polar transform
is almost identical to the fast one; it only differs in the chaining
operation in the second level and in the combination of non-identical
channels at each step.  It is easy to see that neither of these
differences affects the complexity of computing the likelihood ratios
of the polarized channels.  That is, $\kappa_s(N)=O(N\log N)$.    This
implies that the total decoding complexity at blocklength~$MN$ is
$O(MN\log MN)$, similar to regular polar codes.  Similar arguments
show that the encoding complexity is also $O(MN\log MN)$.  Note also
that the chain length~$K$ affects encoding/decoding complexities only
insofar as it appears as a linear factor in the blocklength.

\section{Codes with Arbitrary Rates}

We now discuss how to obtain universal polar codes with rates other
than $1/2$.  Recall that in the previous section we fixed the rate of
the code by using only the universally good channel $\G_n$ for coding.
When $I(\cW)$ is greater than $1/2$, the code rate can be increased by
considering coding over $\B_n$ also, since Proposition~\ref{prop:slow}
then implies $I(\B_n)>0$.  For example, when $I(\G_n)$ becomes
sufficiently close to~$1$, one may obtain more universally good
channels by only slow-polarizing $\B_n$ further.  When $I(\cW)$ is
less than $1/2$, the same method can be used by slow-polarizing $\G_n$
further once $I(\G_n)$ becomes sufficiently close to $2I(\cW)$.  Each
stage of this polarization method turns half of the remaining
nonextremal channels to extremal ones.  The resulting good channels
can then be fast-polarized for coding.  However, the blocklengths of
such constructions can be very large, since even a single stage of
slow polarization requires a large number of recursions (recall
Table~\ref{table:case1}). 

Instead, here we generalize the ideas in Section~\ref{sec:method} to
construct codes with rates $g/(b+g)$ for given positive integers
$g$~and~$b$.  Following the rate-$1/2$ case, this can be done if one
can (i)~combine $b+g$ channels at a time to create only $b+g$ channel
types after each level of slow polarization, and (ii)~ensure that~$g$
of these become better in each step and the remaining~$b$ become
worse.  As before, once the good channels become nearly perfect, one
can boost their reliabilities through fast polarization.  

It thus suffices to describe a construction that has properties
(i)~and~(ii).  Again, the simplest description is through figures.
Figure~\ref{fig:sample} shows an example of the type of transforms we
will consider.  In particular, the transform circuit consists of $b+g$
horizontal lines, each of which has a single modulo-$2$ addition that
connects it to the line below.  Starting at the second line from the
top, one can place this connection to the right or to the left of the
connection above.  
\begin{figure}[t]
\centerline{
\scalebox{1} 
{
\begin{pspicture}(0,-1.7325)(5.5490627,1.7059375)
\pscircle[linewidth=0.04,dimen=outer](3.050938,-0.3075){0.2}
\psline[linewidth=0.04cm](3.24,-0.3140625)(4.76,-0.3140625)
\psline[linewidth=0.04cm](3.050938,-0.2075)(3.050938,-0.4075)
\psline[linewidth=0.04cm](2.9509375,-0.3075)(3.1509376,-0.3075)
\psline[linewidth=0.04cm](3.050938,-0.5075)(3.050938,-0.9075)
\psline[linewidth=0.04cm](2.24,-0.9140625)(4.76,-0.9140625)
\psline[linewidth=0.04cm](1.22,-0.9140625)(1.8800001,-0.9140625)
\psline[linewidth=0.04cm](1.22,-1.5140625)(4.76,-1.5140625)
\psline[linewidth=0.04cm](1.22,0.2859375)(1.86,0.2859375)
\pscircle[linewidth=0.04,dimen=outer](2.0509374,0.2925){0.2}
\psline[linewidth=0.04cm](2.0509374,0.3925)(2.0509374,0.1925)
\psline[linewidth=0.04cm](1.9509376,0.2925)(2.1509376,0.2925)
\psline[linewidth=0.04cm](2.0509374,0.0925)(2.0509374,-0.3075)
\psline[linewidth=0.04cm](1.22,-0.3140625)(2.88,-0.3140625)
\pscircle[linewidth=0.04,dimen=outer](3.050938,0.8925){0.2}
\psline[linewidth=0.04cm](3.24,0.8859375)(4.76,0.8859375)
\psline[linewidth=0.04cm](3.050938,0.9925)(3.050938,0.7925)
\psline[linewidth=0.04cm](2.9509375,0.8925)(3.1509376,0.8925)
\psline[linewidth=0.04cm](3.050938,0.6925)(3.050938,0.2925)
\psline[linewidth=0.04cm](2.22,0.2859375)(4.76,0.2859375)
\psline[linewidth=0.04cm](1.2,0.8859375)(2.8600001,0.8859375)
\usefont{T1}{ppl}{m}{n}
{\footnotesize
\rput(0.76453125,1.5025){$\B_1^{(1)}$}
\rput(0.7445313,0.9025){$\B_1^{(2)}$}
\rput(0.78453124,0.3025){$\B_1^{(3)}$}
\rput(0.76453125,-0.2975){$\G_1^{(1)}$}
\rput(0.78453124,-0.8975){$\B_1^{(4)}$}
\rput(0.76453125,-1.4975){$\G_1^{(2)}$}
}
\pscircle[linewidth=0.04,dimen=outer](2.050938,-0.9075){0.2}
\psline[linewidth=0.04cm](2.050938,-0.8075)(2.050938,-1.0075)
\psline[linewidth=0.04cm](1.9509374,-0.9075)(2.1509376,-0.9075)
\psline[linewidth=0.04cm](2.050938,-1.1075)(2.050938,-1.5075)
\usefont{T1}{ppl}{m}{n}
\rput(5.094531,0.9025){$W$}
\usefont{T1}{ppl}{m}{n}
\rput(5.094531,0.3025){$W$}
\usefont{T1}{ppl}{m}{n}
\rput(5.094531,-0.2975){$W$}
\usefont{T1}{ppl}{m}{n}
\rput(5.094531,-0.8975){$W$}
\usefont{T1}{ppl}{m}{n}
\rput(5.094531,-1.4975){$W$}
\pscircle[linewidth=0.04,dimen=outer](4.0509377,1.4925){0.2}
\psline[linewidth=0.04cm](4.0509377,1.5925)(4.0509377,1.3925)
\psline[linewidth=0.04cm](3.9509375,1.4925)(4.1509376,1.4925)
\psline[linewidth=0.04cm](4.0509377,1.2925)(4.0509377,0.8925)
\psline[linewidth=0.04cm](1.2,1.4859375)(3.8600001,1.4859375)
\psline[linewidth=0.04cm](4.2200003,1.4859375)(4.76,1.4859375)
\usefont{T1}{ppl}{m}{n}
\rput(5.094531,1.5025){$W$}
\end{pspicture} 
}}
\caption{
\label{fig:sample}
A one-level transform that combines $b+g=6$ channels with $g=2$ and
$b=4$.  Only the labels of the channels at the corresponding locations
are shown. 
}
\end{figure}
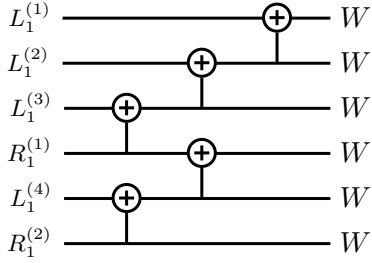

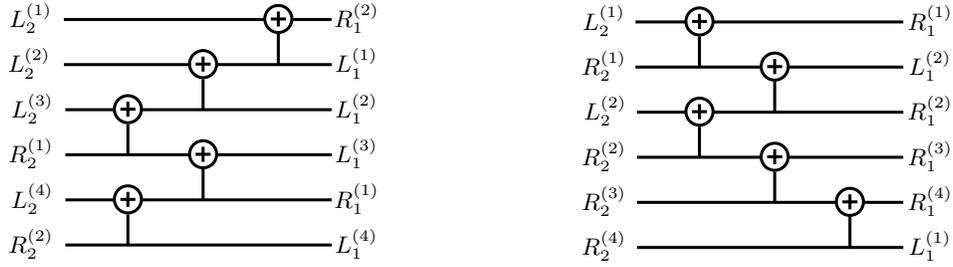
\begin{figure*}[!ht]
\centerline{
\scalebox{1} 
{
\begin{pspicture}(0,-1.7325)(5.5490627,1.7059375)
\pscircle[linewidth=0.04,dimen=outer](3.050938,-0.3075){0.2}
\psline[linewidth=0.04cm](3.24,-0.3140625)(4.76,-0.3140625)
\psline[linewidth=0.04cm](3.050938,-0.2075)(3.050938,-0.4075)
\psline[linewidth=0.04cm](2.9509375,-0.3075)(3.1509376,-0.3075)
\psline[linewidth=0.04cm](3.050938,-0.5075)(3.050938,-0.9075)
\psline[linewidth=0.04cm](2.24,-0.9140625)(4.76,-0.9140625)
\psline[linewidth=0.04cm](1.22,-0.9140625)(1.8800001,-0.9140625)
\psline[linewidth=0.04cm](1.22,-1.5140625)(4.76,-1.5140625)
\psline[linewidth=0.04cm](1.22,0.2859375)(1.86,0.2859375)
\pscircle[linewidth=0.04,dimen=outer](2.0509374,0.2925){0.2}
\psline[linewidth=0.04cm](2.0509374,0.3925)(2.0509374,0.1925)
\psline[linewidth=0.04cm](1.9509376,0.2925)(2.1509376,0.2925)
\psline[linewidth=0.04cm](2.0509374,0.0925)(2.0509374,-0.3075)
\psline[linewidth=0.04cm](1.22,-0.3140625)(2.88,-0.3140625)
\pscircle[linewidth=0.04,dimen=outer](3.050938,0.8925){0.2}
\psline[linewidth=0.04cm](3.24,0.8859375)(4.76,0.8859375)
\psline[linewidth=0.04cm](3.050938,0.9925)(3.050938,0.7925)
\psline[linewidth=0.04cm](2.9509375,0.8925)(3.1509376,0.8925)
\psline[linewidth=0.04cm](3.050938,0.6925)(3.050938,0.2925)
\psline[linewidth=0.04cm](2.22,0.2859375)(4.76,0.2859375)
\psline[linewidth=0.04cm](1.2,0.8859375)(2.8600001,0.8859375)
{\footnotesize
\usefont{T1}{ppl}{m}{n}
\rput(0.76453125,1.5025){$\B_2^{(1)}$}
\rput(0.7445313,0.9025){$\B_2^{(2)}$}
\rput(0.78453124,0.3025){$\B_2^{(3)}$}
\rput(0.76453125,-0.2975){$\G_2^{(1)}$}
\rput(0.78453124,-0.8975){$\B_2^{(4)}$}
\rput(0.76453125,-1.4975){$\G_2^{(2)}$}
}
\pscircle[linewidth=0.04,dimen=outer](2.050938,-0.9075){0.2}
\psline[linewidth=0.04cm](2.050938,-0.8075)(2.050938,-1.0075)
\psline[linewidth=0.04cm](1.9509374,-0.9075)(2.1509376,-0.9075)
\psline[linewidth=0.04cm](2.050938,-1.1075)(2.050938,-1.5075)
{\footnotesize
\usefont{T1}{ppl}{m}{n}
\rput(5.094531,1.5025){$\G_1^{(2)}$}
\rput(5.094531,0.9025){$\B_1^{(1)}$}
\rput(5.094531,0.3025){$\B_1^{(2)}$}
\rput(5.094531,-0.2975){$\B_1^{(3)}$}
\rput(5.094531,-0.8975){$\G_1^{(1)}$}
\rput(5.094531,-1.4975){$\B_1^{(4)}$}
}
\pscircle[linewidth=0.04,dimen=outer](4.0509377,1.4925){0.2}
\psline[linewidth=0.04cm](4.0509377,1.5925)(4.0509377,1.3925)
\psline[linewidth=0.04cm](3.9509375,1.4925)(4.1509376,1.4925)
\psline[linewidth=0.04cm](4.0509377,1.2925)(4.0509377,0.8925)
\psline[linewidth=0.04cm](1.2,1.4859375)(3.8600001,1.4859375)
\psline[linewidth=0.04cm](4.2200003,1.4859375)(4.76,1.4859375)
\end{pspicture} 
}

\begin{pspicture}(-2.0,-1.7164062)(5.4090624,1.6898438)
\pscircle[linewidth=0.04,dimen=outer](2.8309379,-0.32359374){0.2}
\psline[linewidth=0.04cm](3.02,-0.33015624)(4.54,-0.33015624)
\psline[linewidth=0.04cm](2.8309379,-0.22359376)(2.8309379,-0.42359376)
\psline[linewidth=0.04cm](2.7309375,-0.32359374)(2.9309375,-0.32359374)
\psline[linewidth=0.04cm](2.8309379,-0.5235937)(2.8309379,-0.92359376)
\psline[linewidth=0.04cm](4.02,-0.93015623)(4.54,-0.93015623)
\psline[linewidth=0.04cm](1.0000001,-0.93015623)(3.64,-0.93015623)
\psline[linewidth=0.04cm](1.0000001,-1.5301563)(4.54,-1.5301563)
\psline[linewidth=0.04cm](1.0000001,0.26984376)(1.6400001,0.26984376)
\pscircle[linewidth=0.04,dimen=outer](1.8309375,0.27640626){0.2}
\psline[linewidth=0.04cm](1.8309375,0.37640625)(1.8309375,0.17640625)
\psline[linewidth=0.04cm](1.7309376,0.27640626)(1.9309376,0.27640626)
\psline[linewidth=0.04cm](1.8309375,0.07640625)(1.8309375,-0.32359374)
\psline[linewidth=0.04cm](1.0000001,-0.33015624)(2.66,-0.33015624)
\pscircle[linewidth=0.04,dimen=outer](2.8309379,0.87640625){0.2}
\psline[linewidth=0.04cm](3.02,0.8698437)(4.54,0.8698437)
\psline[linewidth=0.04cm](2.8309379,0.9764063)(2.8309379,0.7764062)
\psline[linewidth=0.04cm](2.7309375,0.87640625)(2.9309375,0.87640625)
\psline[linewidth=0.04cm](2.8309379,0.67640626)(2.8309379,0.27640626)
\psline[linewidth=0.04cm](2.0,0.26984376)(4.54,0.26984376)
\psline[linewidth=0.04cm](0.9800001,0.8698437)(2.64,0.8698437)
{\footnotesize
\usefont{T1}{ppl}{m}{n}
\rput(0.56453127,1.4864062){$\B_2^{(1)}$}
\rput(0.56453127,0.88640624){$\G_2^{(1)}$}
\rput(0.58453125,0.28640625){$\B_2^{(2)}$}
\rput(0.56453127,-0.31359375){$\G_2^{(2)}$}
\rput(0.56453127,-0.91359377){$\G_2^{(3)}$}
\rput(0.56453127,-1.5135938){$\G_2^{(4)}$}

\rput(4.9,1.4864062){$\G_1^{(1)}$}
\rput(4.9,0.88640624){$\B_1^{(2)}$}
\rput(4.9,0.28640625){$\G_1^{(2)}$}
\rput(4.9,-0.31359375){$\G_1^{(3)}$}
\rput(4.9,-0.91359377){$\G_1^{(4)}$}
\rput(4.9,-1.5135938){$\B_1^{(1)}$}
}
\pscircle[linewidth=0.04,dimen=outer](3.8309379,-0.92359376){0.2}
\psline[linewidth=0.04cm](3.8309379,-0.82359374)(3.8309379,-1.0235938)
\psline[linewidth=0.04cm](3.7309375,-0.92359376)(3.9309375,-0.92359376)
\psline[linewidth=0.04cm](3.8309379,-1.1235938)(3.8309379,-1.5235938)
\pscircle[linewidth=0.04,dimen=outer](1.8309377,1.4764062){0.2}
\psline[linewidth=0.04cm](1.8309377,1.5764062)(1.8309377,1.3764062)
\psline[linewidth=0.04cm](1.7309374,1.4764062)(1.9309375,1.4764062)
\psline[linewidth=0.04cm](1.8309377,1.2764063)(1.8309377,0.87640625)
\psline[linewidth=0.04cm](0.9800001,1.4698437)(1.6600001,1.4698437)
\psline[linewidth=0.04cm](2.02,1.4698437)(4.54,1.4698437)
\end{pspicture} }
\caption{
A rate-$2/6$ transform (left) and a rate-$4/6$ transform (right).
Channels enter both transforms on the right-hand-side and produce the
channels on the left-hand-side.
\label{fig:two-rates}
}
\end{figure*}

The channels $U_i\to Y_1^{b+g}U_1^{i-1}$ produced by the transform are
defined as usual, where the inputs and outputs are numbered in
increasing order from top to bottom.  We label the channels as follows
(see Figure~\ref{fig:sample}):  If a line's connection to the bottom
is on the left side of its connection to the top, then the
corresponding channel is called an $\B$-channel.  The $i$th such
channel from the top is called $\B_1^{(i)}$.  Similarly, a channel
whose connection to the bottom is on the right side of its connection
to the top is called a $\G$-channel.  In addition, the top channel is
an $\B$-channel and the bottom channel is an $\G$-channel.  Observe
that the fraction of $\B$- and $\G$-channels can be adjusted to
arbitrary non-zero values by an appropriate choice of transform.

We will restrict our attention to two types of transforms for which
the claims will be easy to verify.  For $g\le b$ (that is, when the
target rate is less than $1/2$), we will use the transform that
produces the channels in the order
\begin{align}
\label{eqn:small-1}
\underbrace{\B\B\dots\B}_{b-g}
	\underbrace{\B\G\B\G\dots\B\G}_{g\text{ pairs}}
\end{align}
That is, the top $b-g$ channels will be of type~$\B$, followed by an
alternating sequence of $\B$- and $\G$-channels.  In order to define a
recursion, we need to specify the order in which the $b+g$ channels
enter the transform in the next level.  In the present case, the input
order is obtained by cyclically down-shifting~\eqref{eqn:small-1} by
one:
\begin{align}
\label{eqn:small-2}
\G\underbrace{\B\B\dots\B}_{b-g+1}
	\underbrace{\G\B\dots\G\B}_{g-1\text{ pairs}}
\end{align}

For the $g\ge b$ case, the top channels produced by the transform will
be of alternating types, followed by a sequence of $\G$-channels: 
\begin{align}
\label{eqn:large-1}
\underbrace{\B\G\B\G\dots\B\G}_{b\text{ pairs}}
	\underbrace{\G\G\dots\G}_{g-b}
\end{align}
These channels will be input to the next recursion after up-shifting
the order~\eqref{eqn:large-1} by one:
\begin{align}
\label{eqn:large-2}
\underbrace{\G\B\dots\G\B}_{b-1\text{ pairs}}
	\underbrace{\G\G\dots\G}_{g-b+1}\B
\end{align}
Examples of both recursions are shown in Figure~\ref{fig:two-rates}.
We will label the channels produced by these recursions as in the
previous section:  If $g\le b$, then the channels
$\B_n^{(1)},\dots,\B_n^{(b)}$ and $\G_n^{(1)}\dots\G_n^{(g)}$ after
the $n$th recursion are transformed through
\eqref{eqn:small-1}~and~\eqref{eqn:small-2} to produce
$\B_{n+1}^{(1)},\dots,\B_{n+1}^{(b)}$ and
$\G_{n+1}^{(1)}\dots\G_{n+1}^{(g)}$.  The first recursion takes $b+g$
copies of $W$ as input.  For the case $g\ge b$, the recursions are
defined through \eqref{eqn:large-1}~and~\eqref{eqn:large-2}.

The reason for the labeling above is the analogy between the
($\B$,$\G$)-channels and the channels ($\B_1$,$\G_1$) of the previous
section.  Indeed, suppose that we combine $b+g$ copies of~$W$ through
a transform that produces the channels $\B_1^{(1)},\dots,\B_1^{(b)}$
and $\G_1^{(1)},\dots,\G_1^{(g)}$.
We clearly have
$$
\sum_{i=1}^b I(\B_n^{(i)})
	+\sum_{i=1}^g I(\G_n^{(i)}) = (b+g)I(W).
$$
Moreover, the $\B$-channels are worse than $W$ and the $\G$-channels
are better:

\begin{proposition}
For all $i=1,\dots,b$ and $j=1,\dots,g$ we have
$$
I(\B_1^{(i)})\le I(W)\le I(\G_1^{(j)}).
$$
Both inequalities are strict unless $I(W)\in\{0,1\}$.
\end{proposition}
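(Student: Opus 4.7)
The plan is to reduce the $(b+g)$-line transform to a chain of basic two-channel polarizations and propagate a capacity invariant through those steps. The workhorse is the following \emph{key lemma}, essentially \cite[Lemma~2.1]{Sasoglu2011}: for any binary-input channels $V, W$ one has $I((V,W)^-)\le I(V)\wedge I(W)$ and $I((V,W)^+)\ge I(V)\vee I(W)$, with strict inequalities whenever $I(V),I(W)\in(0,1)$. The ``$\ge$'' half follows by data processing applied to $(V,W)^+$, whose output contains $W(X)$ (giving $I\ge I(W)$) and the pair $(V(X+U),U)$ that, conditioned on $U$, yields $I\ge I(V)$; the ``$\le$'' half then follows from the sum-conservation identity $I((V,W)^-)+I((V,W)^+)=I(V)+I(W)$.

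I would then view the transform as a sequence of basic XOR operations between adjacent lines, processed in decreasing order of their horizontal position. Write $A_j$ for the XOR on line $j$ at position $x_j$ (involving lines $j$ and $j+1$) and let $V_j$ denote the effective channel currently carried by line $j$, with capacity $c_j := I(V_j)$. Initially $V_j = W$ for every $j$, and when $A_l$ is processed the pair $(V_l, V_{l+1})$ is replaced by $((V_l, V_{l+1})^-, (V_l, V_{l+1})^+)$, leaving all other channels unchanged.

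The heart of the argument is the following invariant, which I would prove by induction on the number of XORs processed: at every stage, $c_j\le I(W)$ if the most recent XOR touching line $j$ was its own XOR $A_j$, and $c_j\ge I(W)$ otherwise (including the initial state, where no XOR has yet touched line $j$). For the inductive step, immediately before processing $A_l$ the only XOR that could have previously touched line $l$ is $A_{l-1}$, so the inductive hypothesis forces $c_l\ge I(W)$; likewise the only XOR that could have previously touched line $l+1$ is $A_{l+1}$ (its own), so the inductive hypothesis forces $c_{l+1}\le I(W)$. The key lemma then gives new $c_l\le c_l\wedge c_{l+1}\le I(W)$ and new $c_{l+1}\ge c_l\vee c_{l+1}\ge I(W)$, precisely matching the updated ``last event'' for the two lines.

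Finally, the $\B$-versus-$\G$ labeling aligns exactly with the invariant at termination: a middle line is $\B$ precisely when $x_j<x_{j-1}$, so $A_{j-1}$ is processed before $A_j$ and the own XOR is the last one to touch line $j$; the top line has only $A_1$ and is labeled $\B$, and symmetrically the bottom line has only $A_{b+g-1}$ (from above) and is labeled $\G$. Hence $I(\B_1^{(i)})\le I(W)\le I(\G_1^{(j)})$, and the strict version of the key lemma upgrades both inequalities to strict whenever $I(W)\in(0,1)$, since under that hypothesis sum conservation and the strict bounds keep every intermediate capacity in $(0,1)$. The obstacle this invariant circumvents is that the crude bound $I((V,V')^-)\le I(V)\wedge I(V')$ does not by itself force $\le I(W)$ when both $V, V'$ happen to exceed $I(W)$; the invariant guarantees that at each decisive XOR the partner line always sits on the favorable side of $I(W)$, so the relevant minimum is automatically at most $I(W)$.
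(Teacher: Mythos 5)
Your proof is correct in substance but follows a genuinely different route from the paper. The paper fixes the two specific orderings \eqref{eqn:small-1} and \eqref{eqn:large-1}, writes down (``one can check'') closed-form expressions for every output channel --- $\B_1^{(i)}=(Q_i,W)^-$, $(Q_{b-g+1},W^-)^-$, $(W^+,W^-)^-$, $(W^+,W)^-$ and the analogous $+$ forms for the $\G$'s --- and then applies the min/max inequality channel by channel; these explicit formulas are also reused later in the proof of the polarization proposition. You instead sweep the XORs from the channel side to the input side, maintain the invariant ``capacity $\le I(W)$ iff the last gate touching the line was its own,'' and observe that this invariant matches the $\B$/$\G$ labeling at termination. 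Your inductive step is sound: when $A_l$ is processed, the only gate that could already have touched line $l$ is $A_{l-1}$ and the only one that could have touched line $l+1$ is $A_{l+1}$, so the partner capacities sit on the right sides of $I(W)$, and the extremes of the basic transform do the rest. What your argument buys is generality and economy: it covers every admissible left/right placement at once and removes the $g\le b$ versus $g\ge b$ and $g=1$ versus $g>1$ case analysis; what it costs is that it leans on the unproved structural claim that processing the XORs in decreasing horizontal position computes the successive-cancellation channels $U_i\to Y_1^{b+g}U_1^{i-1}$. That claim is true for these chain circuits (and indeed, instantiated on the orderings \eqref{eqn:small-1}--\eqref{eqn:small-2} it reproduces exactly the paper's formulas), but it is precisely the content hidden in the paper's ``one can check,'' so your write-up should either prove it (a short induction peeling off the rightmost gate, using that the pre-gate values on the two sides of that gate are independent functions of disjoint input blocks) or at least verify it for the transforms actually used. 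One smaller point: for strictness you need every intermediate capacity to stay in the open interval $(0,1)$ when $I(W)\in(0,1)$; sum conservation plus the strict min/max bounds alone do not give this (they do not prevent a minus output from hitting $0$ or a plus output from hitting $1$), so you should invoke a standard fact such as $Z((V_1,V_2)^+)=Z(V_1)Z(V_2)>0$ and $Z((V_1,V_2)^-)\le Z(V_1)+Z(V_2)-Z(V_1)Z(V_2)<1$, which is a one-line patch.
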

\begin{proof}
 We prove the statement for the case $g \le b$. The case $g > b$ can be analyzed
similarly. By construction, we have from top to bottom the following sequence
of channels
\[
 \underbrace{\B_1^{(1)}, \ldots, \B_1^{(b-g)}}_{b-g},
\underbrace{\B_1^{(b-g+1)}, \G_1^{(1)}, \ldots, \B_1^{(b)}, \G_1^{(g)}}_{g
\text{ pairs}}.
\]
Define $Q_1 = W$ and $Q_{i+1} = (Q_i,W)^+$ for $i = 1,2,\ldots$ If $g = 1$,
we have $\B_1^{(i)} = (Q_i,W)^-$ for $1 \le i \le b$ and $\G_1^{(1)} = (Q_b,
W)^+$. If $g > 1$, we have
\[
\B_1^{(i)} =
\begin{cases}
  (Q_i,W)^- & 1\le i \le b-g\\
  (Q_{b+g-1},W^-)^- & i=b-g+1 \\
  (W^+,W^-)^- & b-g+1 < i < b\\
  (W^+,W)^- & i=b 
\end{cases}
\]
and
\[
\G_1^{(j)} =
\begin{cases}
  (Q_{b-g+1},W^-)^+ & j=1 \\
  (W^+,W^-)^+ & 1 < j < g\\
  (W^+,W)^+ & j=g
\end{cases}.
\]
The claim then follows by noting that
\begin{align*}
 I((W,V)^-) &\le \min\{I(W),\,I(V)\} \\
 &\le \max\{I(W),\,I(V)\} \\
 &\le I((W,V)^+)
\end{align*}
for any two channels $W$ and $V$. Strict inequalities follow again
from~\cite[Lemma~2.1]{Sasoglu2011}.
\end{proof}

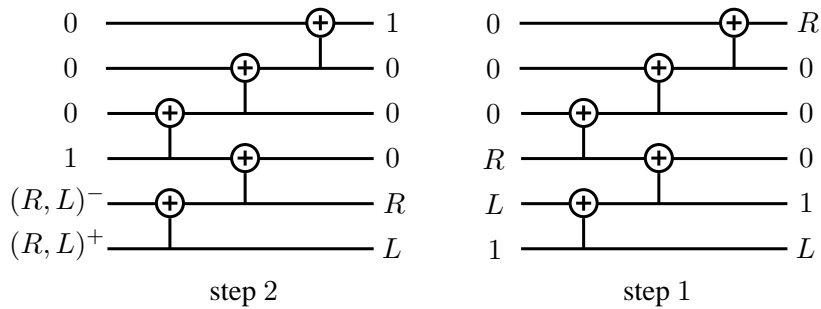
\begin{figure*}[!ht]
\centerline{
\scalebox{1} 
{
\begin{pspicture}(0,-2.0317187)(11.269062,2.0051563)
\pscircle[linewidth=0.04,dimen=outer](3.3909378,-0.00828125){0.2}
\psline[linewidth=0.04cm](3.5800002,-0.01484375)(5.1,-0.01484375)
\psline[linewidth=0.04cm](3.3909378,0.09171875)(3.3909378,-0.10828125)
\psline[linewidth=0.04cm](3.2909374,-0.00828125)(3.4909375,-0.00828125)
\psline[linewidth=0.04cm](3.3909378,-0.20828125)(3.3909378,-0.60828125)
\psline[linewidth=0.04cm](2.5800002,-0.6148437)(5.1,-0.6148437)
\psline[linewidth=0.04cm](1.5600001,-0.6148437)(2.22,-0.6148437)
\psline[linewidth=0.04cm](1.5600001,-1.2148438)(5.1,-1.2148438)
\psline[linewidth=0.04cm](1.5600001,0.58515626)(2.2,0.58515626)
\pscircle[linewidth=0.04,dimen=outer](2.3909376,0.59171873){0.2}
\psline[linewidth=0.04cm](2.3909376,0.69171876)(2.3909376,0.49171874)
\psline[linewidth=0.04cm](2.2909377,0.59171873)(2.4909377,0.59171873)
\psline[linewidth=0.04cm](2.3909376,0.39171875)(2.3909376,-0.00828125)
\psline[linewidth=0.04cm](1.5600001,-0.01484375)(3.22,-0.01484375)
\pscircle[linewidth=0.04,dimen=outer](3.3909378,1.1917187){0.2}
\psline[linewidth=0.04cm](3.5800002,1.1851562)(5.1,1.1851562)
\psline[linewidth=0.04cm](3.3909378,1.2917187)(3.3909378,1.0917188)
\psline[linewidth=0.04cm](3.2909374,1.1917187)(3.4909375,1.1917187)
\psline[linewidth=0.04cm](3.3909378,0.99171877)(3.3909378,0.59171873)
\psline[linewidth=0.04cm](2.56,0.58515626)(5.1,0.58515626)
\psline[linewidth=0.04cm](1.5400001,1.1851562)(3.2,1.1851562)
\usefont{T1}{ppl}{m}{n}
\rput(1.0745312,1.8017187){$0$}
\usefont{T1}{ppl}{m}{n}
\rput(1.0745312,0.6017187){$0$}
\usefont{T1}{ppl}{m}{n}
\rput(0.89453125,-0.5782812){$(R,L)^-$}
\usefont{T1}{ppl}{m}{n}
\rput(1.0745312,0.00171875){$1$}
\usefont{T1}{ppl}{m}{n}
\rput(1.0745312,1.2017188){$0$}
\pscircle[linewidth=0.04,dimen=outer](2.3909378,-0.60828125){0.2}
\psline[linewidth=0.04cm](2.3909378,-0.50828123)(2.3909378,-0.7082813)
\psline[linewidth=0.04cm](2.2909374,-0.60828125)(2.4909375,-0.60828125)
\psline[linewidth=0.04cm](2.3909378,-0.80828124)(2.3909378,-1.2082813)
\usefont{T1}{ppl}{m}{n}
\rput(5.3545313,1.2017188){$0$}
\usefont{T1}{ppl}{m}{n}
\rput(5.3545313,0.6017187){$0$}
\usefont{T1}{ppl}{m}{n}
\rput(5.3545313,0.00171875){$0$}
\usefont{T1}{ppl}{m}{n}
\rput(5.3745313,-0.59828126){$R$}
\usefont{T1}{ppl}{m}{n}
\rput(5.344531,-1.1982813){$L$}
\pscircle[linewidth=0.04,dimen=outer](4.390938,1.7917187){0.2}
\psline[linewidth=0.04cm](4.390938,1.8917187)(4.390938,1.6917187)
\psline[linewidth=0.04cm](4.2909374,1.7917187)(4.4909377,1.7917187)
\psline[linewidth=0.04cm](4.390938,1.5917188)(4.390938,1.1917187)
\psline[linewidth=0.04cm](1.5400001,1.7851562)(4.2000003,1.7851562)
\psline[linewidth=0.04cm](4.56,1.7851562)(5.1,1.7851562)
\usefont{T1}{ppl}{m}{n}
\rput(5.3545313,1.8017187){$1$}
\usefont{T1}{ppl}{m}{n}
\rput(0.89453125,-1.1582812){$(R,L)^+$}
\pscircle[linewidth=0.04,dimen=outer](8.890938,-0.00828125){0.2}
\psline[linewidth=0.04cm](9.08,-0.01484375)(10.6,-0.01484375)
\psline[linewidth=0.04cm](8.890938,0.09171875)(8.890938,-0.10828125)
\psline[linewidth=0.04cm](8.790937,-0.00828125)(8.990937,-0.00828125)
\psline[linewidth=0.04cm](8.890938,-0.20828125)(8.890938,-0.60828125)
\psline[linewidth=0.04cm](8.08,-0.6148437)(10.6,-0.6148437)
\psline[linewidth=0.04cm](7.06,-0.6148437)(7.7200003,-0.6148437)
\psline[linewidth=0.04cm](7.06,-1.2148438)(10.6,-1.2148438)
\psline[linewidth=0.04cm](7.06,0.58515626)(7.7000003,0.58515626)
\pscircle[linewidth=0.04,dimen=outer](7.8909373,0.59171873){0.2}
\psline[linewidth=0.04cm](7.8909373,0.69171876)(7.8909373,0.49171874)
\psline[linewidth=0.04cm](7.7909374,0.59171873)(7.9909377,0.59171873)
\psline[linewidth=0.04cm](7.8909373,0.39171875)(7.8909373,-0.00828125)
\psline[linewidth=0.04cm](7.06,-0.01484375)(8.72,-0.01484375)
\pscircle[linewidth=0.04,dimen=outer](8.890938,1.1917187){0.2}
\psline[linewidth=0.04cm](9.08,1.1851562)(10.6,1.1851562)
\psline[linewidth=0.04cm](8.890938,1.2917187)(8.890938,1.0917188)
\psline[linewidth=0.04cm](8.790937,1.1917187)(8.990937,1.1917187)
\psline[linewidth=0.04cm](8.890938,0.99171877)(8.890938,0.59171873)
\psline[linewidth=0.04cm](8.06,0.58515626)(10.6,0.58515626)
\psline[linewidth=0.04cm](7.04,1.1851562)(8.7,1.1851562)
\usefont{T1}{ppl}{m}{n}
\rput(6.6945314,1.7817187){$0$}
\usefont{T1}{ppl}{m}{n}
\rput(6.6945314,0.58171874){$0$}
\usefont{T1}{ppl}{m}{n}
\rput(6.704531,-0.61828125){$L$}
\usefont{T1}{ppl}{m}{n}
\rput(6.6945314,-0.01828125){$R$}
\usefont{T1}{ppl}{m}{n}
\rput(6.6945314,1.1817187){$0$}
\pscircle[linewidth=0.04,dimen=outer](7.890938,-0.60828125){0.2}
\psline[linewidth=0.04cm](7.890938,-0.50828123)(7.890938,-0.7082813)
\psline[linewidth=0.04cm](7.7909374,-0.60828125)(7.9909377,-0.60828125)
\psline[linewidth=0.04cm](7.890938,-0.80828124)(7.890938,-1.2082813)
\usefont{T1}{ppl}{m}{n}
\rput(10.854531,1.2017188){$0$}
\usefont{T1}{ppl}{m}{n}
\rput(10.854531,0.6017187){$0$}
\usefont{T1}{ppl}{m}{n}
\rput(10.854531,0.00171875){$0$}
\usefont{T1}{ppl}{m}{n}
\rput(10.854531,-0.59828126){$1$}
\usefont{T1}{ppl}{m}{n}
\rput(10.844531,-1.1982813){$L$}
\pscircle[linewidth=0.04,dimen=outer](9.890938,1.7917187){0.2}
\psline[linewidth=0.04cm](9.890938,1.8917187)(9.890938,1.6917187)
\psline[linewidth=0.04cm](9.790937,1.7917187)(9.990937,1.7917187)
\psline[linewidth=0.04cm](9.890938,1.5917188)(9.890938,1.1917187)
\psline[linewidth=0.04cm](7.04,1.7851562)(9.7,1.7851562)
\psline[linewidth=0.04cm](10.06,1.7851562)(10.6,1.7851562)
\usefont{T1}{ppl}{m}{n}
\rput(10.874531,1.8017187){$R$}
\usefont{T1}{ppl}{m}{n}
\rput(6.7145314,-1.2182813){$1$}
\usefont{T1}{ptm}{m}{n}
\rput(8.872188,-1.8098438){step $1$}
\usefont{T1}{ptm}{m}{n}
\rput(3.3721876,-1.8098438){step $2$}
\end{pspicture} 
}}
\caption{
\label{fig:nostuck}
An example of polarization.  Suppose that at some point in time the
channels are polarized as in the graph on the right.  Here, $0$
denotes a completely noisy channel, $1$ is a perfect channel, and $R$
and $L$ are the mediocre $R$- and $L$-channels.  Channels enter the
recursion from the right.  After the first step, the channels change
positions, but no polarization takes place.  Nevertheless, the new
positions of $R$ and $L$ ensure that they are combined in the next
step (left).  One can check that mediocre channels are always combined
eventually, regardless of their initial positions.
}
\end{figure*}

Having created $b$ bad and $g$ good channels out of $W$, we wish to
enhance polarization by making the bad channels worse and the good
channels better.   The main result of this section is that these
recursions indeed polarize channels universally:

\begin{proposition}
\hfill\phantom{x}\\
\vspace{-2ex}
\begin{itemize}
\item[(i)]
If $I(W)\ge g/(b+g)$, then for all $1\le i\le g$
$$
\lim_{n\to\infty}I(\G_n^{(i)})=1.
$$ 
\item[(ii)]
If $I(W)\le g/(b+g)$, then  for all $1\le i\le b$
$$
\lim_{n\to\infty}I(\B_n^{(i)})=0.
$$ 
\end{itemize}
\end{proposition}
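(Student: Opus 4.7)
My plan is to extend the convergence argument of Proposition~\ref{prop:slow} to the $(b,g)$-channel setting, by identifying a monotone aggregate quantity and then pinning down its limit via conservation. The key ingredients are: (a)~the conservation identity $\sum_i I(\B_n^{(i)}) + \sum_j I(\G_n^{(j)}) = (b+g)I(W)$ for every $n$, which holds because the transform is an invertible $\mathbb{F}_2$-linear operation and therefore preserves joint entropy; and (b)~the dominance $I(\B_n^{(i)}) \le I(\G_n^{(j)})$ for all $i,j,n$, which I would prove by induction on~$n$ with the previous proposition as the base case, reading the circuit at each induction step and chaining the inequalities $I((V,W)^-)\le \min\{I(V),I(W)\}\le \max\{I(V),I(W)\}\le I((V,W)^+)$.

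Next I would show that $S_n:=\sum_j I(\G_n^{(j)})$ is nondecreasing (equivalently $T_n:=\sum_i I(\B_n^{(i)})$ is nonincreasing). The circuit places every $\G$-output at the ``$+$'' end of every XOR that touches it, and every $\B$-output at the ``$-$'' end; pairing outputs with inputs along the chain of XORs and using the same inequality as above yields $S_{n+1}\ge S_n$. Since $S_n$ is bounded in $[0,g]$ it converges to some $S_\infty$, and by compactness I can pass to a subsequence along which every individual $I(\B_n^{(i)})$ and $I(\G_n^{(j)})$ converges.

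The main obstacle is to argue that these subsequential limits all lie in $\{0,1\}$. The plan is a contradiction argument: if some limit capacity lies strictly in $(0,1)$, then the cyclic shifting rule~\eqref{eqn:small-2} or~\eqref{eqn:large-2} moves its position one step each level, and a combinatorial analysis (the intuition is depicted in Figure~\ref{fig:nostuck}) shows that within finitely many further levels, two mediocre channels must eventually be combined by an XOR. At such a step the strict version of $I((V,W)^+)>\max\{I(V),I(W)\}$ produces a uniform positive increment of $S_n$, contradicting its convergence. The delicate point is that the transform is a chain of XORs rather than a disjoint pairing, so several channels interact at once; one must carefully track each mediocre channel's orbit under the shift to verify that a ``productive'' XOR eventually occurs, and to rule out absorbing configurations in which all pairings of a mediocre channel happen to involve only extremal partners.

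Once all subsequential limits lie in $\{0,1\}$, I would conclude using conservation and dominance. Let $k$ denote the number of limiting $\G$-channels equal to~$1$ and $\ell$ the number of limiting $\B$-channels equal to~$1$. Dominance passes to the limit, so $k<g$ forces $\ell=0$. In case~(i), $k+\ell=(b+g)I(W)\ge g$ together with $\ell=0$ forces $k=g$, so every $\G$-channel converges to~$1$. In case~(ii), $\ell>0$ would force all $\G$'s to equal~$1$ by dominance, giving $k+\ell>g$ and contradicting $(b+g)I(W)\le g$; hence $\ell=0$ and every $\B$-channel converges to~$0$. Because every convergent subsequence yields the same limit, the full sequence converges.
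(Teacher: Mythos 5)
Your conservation identity, the dominance $I(\B_n^{(i)})\le I(\G_n^{(j)})$, and the monotonicity of $S_n=\sum_j I(\G_n^{(j)})$ are all sound and correspond to what the paper establishes by reading the circuit (the paper in fact derives the sharper per-index statements $I(\B_{n+1}^{(i)})\le I(\B_n^{(i)})$ and $I(\G_{n+1}^{(j)})\ge I(\G_n^{(j-1)})$, with indices shifted cyclically). The gap is in your central step: the claim that every subsequential limit lies in $\{0,1\}$ is false, and with it the final counting argument collapses. Slow polarization does not drive all $b+g$ channel types to extremes; only the $\G$-channels become perfect in case (i), while the $\B$-channels saturate at intermediate values. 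Already for $b=g=1$ this is Proposition~\ref{prop:slow}: if $1/2<I(W)<1$ then $I(\B_\infty)=2I(W)-1\in(0,1)$. Equivalently, your identity $k+\ell=(b+g)I(W)$ would force $(b+g)I(W)$ to be an integer, which it generally is not. The contradiction mechanism you propose (``a mediocre channel must eventually meet another mediocre channel, giving a uniform increment of $S_n$'') fails exactly in the limiting configuration that actually occurs: once all $\G$-channels are essentially perfect, every XOR touching a mediocre $\B$-channel has only extremal partners, the inequality $I((V,W)^+)>\max\{I(V),I(W)\}$ is no longer strict, and no further increase of $S_n$ is available---yet nothing is wrong, since $S_n$ has already reached $g$.

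The repair, which is the paper's actual route, is to aim the contradiction only at the $\G$-channels. Suppose in case (i) that some limit of $I(\G_n^{(j)})$ (along the monotone shifted subsequences) is strictly less than $1$. Then the sum of the $\G$-limits is $<g\le(b+g)I(W)$, so by conservation some $\B$-channel has limit $>0$; since $I(\B_n^{(i)})$ is nonincreasing and bounded by $I(W)<1$, that $\B$-channel is genuinely mediocre. Only now does your cyclic-orbit argument apply: the mediocre $\B$- and $\G$-channels are combined within at most $g$ further recursions, and a (quantifiably uniform) strict improvement of the monotone quantity contradicts its convergence. This is precisely the paper's stalling argument, and note that it uses the hypothesis $I(W)\ge g/(b+g)$ exactly at the point where conservation produces the mediocre partner---a step your ``all limits extremal'' plan never invokes, which is a further sign that it cannot be correct as stated.
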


\begin{proof}
We prove (i) for the case $g\le b$.  The arguments for the remaining
three cases are similar.  Recall that the recursions for the case
$g\le b$ are defined through
\eqref{eqn:small-1}~and~\eqref{eqn:small-2}. Define $Q_1 = \G_n^{(g)}$
and $ Q_{i+1} = (Q_i, \B_n^{(i)})^+$ for $1 \le i \le b-g$. When $g =
1$, we have 
$$\B_{n+1}^{(i)} = (Q_i, \B_n^{(i)})^-\text{ for }1 \le i \le b
$$
and
$$
\G_{n+1}^{(1)} = (Q_b, \B_n^{(b)})^+.
$$ 
Therefore, $I(\B_n^{(i)})$
is decreasing while in $n$ $I(\G_n^{(1)})$ is increasing. When $g >1$,
define 
\begin{align*}
P_i^+ &= (\B_n^{(b-g+i)},\G_n^{(i)})^+\\
P_i^- &= (\B_n^{(b-g+i)},\G_n^{(i)})^-
\end{align*}
for $1 \le i < g$. One can check that
the order of inputs to the recursion implies (see Figure~\ref{fig:two-rates} for
reference)
\[
\B_{n+1}^{(i)} =
\begin{cases}
  (Q_i,\B_n^{(i)})^- & 1\le i \le b-g\\
  (Q_{b-g+1},P_1^-)^- & i=b-g+1\\
  (P_{g-b+i-1}^+,P_{g-b+i}^-)^- & b-g+1 < i < b\\
  (P_{g-1}^+,\B_n^{(b)})^- & i=b
\end{cases}.
\]
Note that $I(P_{g-b+i}^-) = I((\B_n^{(i)}, \G_n^{(g-b+i)})^-) \le I(\B_n^{(i)})$
for $b-g+1 \le i < b$. It
follows that $I(\B_{n+1}^{(i)}) \le I(\B_n^{(i)})$ for all $1 \le
i \le b$. Similarly, one can check that
\[
\G_{n+1}^{(j)} =
\begin{cases}
  (Q_{b-g+1},P_1^-)^+ & j=1\\
  (P_{j-1}^+,P_{j}^-)^+ & 1 < j < g\\
  (P_{g-1}^+,\B_n^{(b)})^+ & j=g
\end{cases}.
\]
Note that 
$I(Q_{b-g+1}) \ge I(\G_n^{(g)})$ and $I(P_{j-1}^+) =
I((\B_n^{(b-g+j-1)},\G_n^{(j-1)})^+) \ge I(\G_n^{(j-1)})$ for $1 < j \le g$. It
follows that $I(\G_{n+1}^{(1)}) \ge I(\G_n^{(g)})$ and $I(\G_{n+1}^{(j)}) \ge
I(\G_n^{(j-1)})$ for $1 < j \le g$. That is the $R$-channels at level $n+1$ are
better than the ones at level $n$, with a shift in indices. 

To show that the improvement in $I(\G_n^{(i)})$ is strict unless all
the $\G$-channels are perfect, one needs to rule out the following
possibility:  If at a point in the polarization process some, but not
all, $R$-channels become perfect, then the perfect channels entering
subsequent recursions may stall the polarization of the non-perfect
ones.  We now argue that the structure of the channel combinations
does not allow this.  Suppose that all but one $R$-channels polarize
to perfect ones.  Then, there must be at least one unpolarized
$L$-channel, since otherwise the inequality $I(W)\ge g/(b+g)$ would be
violated.  Suppose that there is only one such $L$-channel $\B^{(k)}$
and all others are polarized to useless ones. One can then check that
either the unpolarized $L$- and $R$-channels will be combined in the
next recursion (which will further polarize the $R$-channel), or their
positions will change.  In particular, the $L$-channel index $k$ will
remain unchanged after each recursion, while the $R$-channel index
will be cyclically shifted by one.  If $1 \le k \le b-g+1$, then
$L^{(k)}$ will be combined with the unpolarized $R$-channel when the
$R$-channel index is shifted to $g$. On the other hand, if $b-g+1 < k
\le b$, then $L^{(k)}$ will be combined with the unpolarized
$R$-channel when the $R$-channel index is shifted to neighboring
positions $k-b+g-1$ or $k-b+g$.  Therefore, regardless of the
unpolarized $R$-channel's position, strict polarization will take
place in at most $g$ recursions.  (See Figure~\ref{fig:nostuck} for an
example of strict polarization of period two over the rate-$2/6$
recursion.)  Therefore the $R$-channel will polarize further,
eventually becoming perfect.  The same reasoning can be used when
there is more than one unpolarized $R$-channel and $L$-channel.
\end{proof}

\subsection{Polar coding}
Fix a transform of rate $g/(b+g)$.  The code construction is identical
to the one in Section~\ref{sec:method}: In the first level, channels
are combined in the usual fashion.  This is followed by a single step
of chaining~$K$ transforms that combines channels of different types.
Then, each subsequent step combines $b+g$ transform blocks in the same
fashion.  Once sufficient universal polarization is attained, the good
channels $\G_n^{(i)}$ are fast-polarized further using the Ar\i kan
transform.

\subsection{Rate}
As in the rate-$1/2$ case, the slow polarization stage involves
leaving some channels unconnected.  Similar arguments to those in
Section~\ref{sec:method} show that the fraction of unpolarized
channels is upper bounded by
$$
\frac{(g+b)^2}K,
$$
which can be made as small as desired by picking a large~$K$.

\subsection{Error probability}
Since the reliability of the good channels are determined essentially
by the fast polarization stage, the error probability of the SC decoder
can again be upper bounded for all $W\in\cW(g/b+g)$ by
$$
No(2^{-M^\beta}),
$$
where $N$ and $M$ respectively are the
lengths of the slow and the fast polarization stages. 

\subsection{Complexity}
The present construction differs from the one in
Section~\ref{sec:method} only in the size $b+g$ of the basic
transform,  and it is easily seen that the basic transforms we
discussed can be encoded and decoded in linear time.  Hence, $b+g$
does not affect the encoding and decoding complexities, which are both
$O(MN\log MN)$ for a blocklength-$MN$ code.

\section{Discussion} 
In independent work~\cite{HU2013}, Hassani and Urbanke propose two
polarization-based methods to construct universal codes.  On close
inspection, one of these methods and the one presented here are seen
to be complementary.  In particular, whereas the method here
guarantees universality in the first stage and reliability in the
second, the construction in~\cite{HU2013} reverses this order by
combining identical channels in the first stage (i.e., fast
polarization) and distinct channels in the second (i.e., slow
polarization).  It is evident from both works that many other
variations are possible for constructing universal polar codes, such
as interleaving the fast and slow polarization stages.  Such
alternatives may help reduce the impractically large blocklengths that
the present paper's methods require (see Table~\ref{table:case1}) to
simultaneously achieve universality and reliability.  For this purpose
one may also consider using larger $(b+g)$-type constructions for
simple fractional rates such as $1/2$, or mixing the unconnected
channels into the process to increase the speed of slow polarization.
The investigation of these are left for future study.

In addition to providing robustness to point-to-point channel coding,
universal polarization is also of interest from a theoretical
perspective.  Recall that one of the many appeals of polarization
methods is the ease with which they have been extended to other
communication settings.  Polar codes' optimality have already been
established for multiple-access channels~\cite{Arikan2012}, degraded
wiretap channels~\cite{MaV2011}, lossless~\cite{Arikan2010},
lossy~\cite{KoU2010}, and distributed source coding~\cite{Arikan2012}.
However, the theory is more difficult to extend to settings with two
or more receivers, and the main bottleneck appears to be the
incompatibility of polar code designs for different receivers.  With
universal polarization schemes, it may be possible to implement polar
coding to achieve the best known rates over broadcast channels,
interference channels, etc.   

It is worth mentioning that the methods discussed here also yield
universal source codes, and can be extended to non-binary alphabets
using standard arguments~\cite[Chapter~3]{Sasoglu2011}.

\section{Appendix A: Polarization preserves less noisy ordering}
Recall that designing a polar code of length $2^n$ for a channel $W$
consists in finding a set of good channels among $W^{\bf s}$, ${\bf
s}\in\{-,+\}^n$, which are defined recursively through
\begin{align*}
W^-(y_1^2|u_1)
	&=\sum_{u_2\in\{0,1\}}
	\tfrac12 W(y_1|u_1+u_2) W(y_2|u_2),\\
W^+(y_1^2,u_1|u_2)
	&=\tfrac12 W(y_1|u_1+u_2) W(y_2|u_2).
\end{align*}
A good code of rate $R<I(W)$ can be obtained by picking an $R$
fraction of these channels whose symmetric-capacities $I(W^{\bf s})$
are largest.  Here, we show that a polar code designed in this manner
for a channel is also good for all \emph{less noisy} versions of this
channels.  This result is also established independently
in~\cite{SR2013}.  Here we show it by proving the stronger statement
that the less noisy ordering of channels is preserved under
polarization.  Recall that a channel $V$ is said to be less noisy than
$W$ if $I(T;Y) \le I(T; Z)$ for all distributions of the form
\begin{align}
\label{eqn:markov}
p(t,x,y,z)=p(x,t)W(y|x)V(z|x),
\end{align}
that is, for all distributions for which $T$---$X$---$YZ$ is a Markov
chain~\cite{KM1977}.  Observe that this implies $I(W)\le I(V)$, and
thus will also imply that $I(W^{\bf s})\le I(V^{\bf s})$ for all $\bf
s$ once we show that polarization preserves the less noisy order.  Due
to the recursive description of the polarized channels, it suffices to
prove the latter claim for a single-step:

\begin{proposition}
\label{prop:ln} 
Let $W$ and $V$ be binary-input channels.  If $V$ is
less noisy than $W$, then
\begin{enumerate}
\item[(i)]
 $V^+$ is less noisy than $W^+$,
\item[(ii)]
 $V^-$ is less noisy than $W^-$. 
\end{enumerate}
\end{proposition}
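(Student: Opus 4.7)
The plan is to couple $W$ and $V$ on shared inputs and then carry out a two-step ``hybrid'' swap on the outputs, using the less noisy hypothesis twice in a conditional form.

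In case (ii), given any $p(t, u_1)$, I draw an independent uniform $U_2$ and set $(Y_1, Y_2) = (W(U_1 + U_2),\, W(U_2))$ and $(Z_1, Z_2) = (V(U_1 + U_2),\, V(U_2))$, with all four channel uses conditionally independent given $(U_1, U_2)$. The desired inequality is exactly $I(T; Y_1, Y_2) \le I(T; Z_1, Z_2)$. In case (i), given any $p(t, u_2)$, I perform the analogous coupling with shared independent uniform $U_1$. Since $U_1 \perp T$, the chain rule reduces the inequality $I(T; Y_1, Y_2, U_1) \le I(T; Z_1, Z_2, U_1)$ to $I(T; Y_1, Y_2 \mid U_1{=}u_1) \le I(T; Z_1, Z_2 \mid U_1{=}u_1)$ for each $u_1$, which has the same form as case (ii) modulo an input relabeling $u_2 \mapsto u_2 + u_1$ on the first channel use.

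The core estimate, in both cases, is
\[
I(T; Y_1, Y_2) \;\le\; I(T; Z_1, Y_2) \;\le\; I(T; Z_1, Z_2).
\]
For the first inequality, the chain rule gives $I(T; Y_1, Y_2) - I(T; Z_1, Y_2) = I(T; Y_1 \mid Y_2) - I(T; Z_1 \mid Y_2)$, so it suffices to check $I(T; Y_1 \mid Y_2 = y_2) \le I(T; Z_1 \mid Y_2 = y_2)$ for each $y_2$. Conditioning on $Y_2 = y_2$ preserves the product structure between $(T, U_1)$ and $U_2$, only reweighting the $U_2$-marginal by $W(y_2 \mid \cdot)$; with $X := U_1 + U_2$, the Markov chain $T - X - (Y_1, Z_1)$ persists under the conditional law because $Y_1, Z_1$ remain conditionally independent of $T$ through $W(\cdot\mid X)$ and $V(\cdot\mid X)$. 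Hence less noisy applies to the conditional joint. For the second inequality, the analogous reduction gives $I(T; Y_2 \mid Z_1 = z_1) \le I(T; Z_2 \mid Z_1 = z_1)$; after conditioning on $Z_1 = z_1$, the outputs $Y_2, Z_2$ are still conditionally independent of $T$ given $U_2$ via $W, V$, so $T - U_2 - (Y_2, Z_2)$ remains a Markov chain and less noisy applies again.

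The main obstacle is the bookkeeping: after each conditioning step one must verify that the appropriate Markov chain $T - \text{(effective input)} - (Y, Z)$ persists and that the effective channels are still $W$ and $V$, possibly up to an input bijection such as $x \mapsto x + u_1$. The less noisy order is trivially invariant under such input bijections (the class of admissible joint distributions $p(t,x)$ is relabeled in the same way on both sides of the inequality), so once the Markov structure is checked the two-line swap closes both (i) and (ii) in the same stroke.
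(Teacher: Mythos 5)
Your proposal is correct and is essentially the paper's own argument: you couple $W$ and $V$ on shared inputs $(U_1,U_2)$ and swap one output at a time via the chain rule, invoking the less noisy hypothesis in conditional form after checking that the Markov chains $T$---$X_1$---$Y_1Z_1$ and $T$---$U_2$---$Y_2Z_2$ survive the conditioning; the paper does exactly this (with the mirror-image swap order, $Y_2\to Z_2$ first and then $Y_1\to Z_1$) and treats (i) by the same observation $I(T;Y_1Y_2U_1)=I(T;Y_1Y_2\mid U_1)$ that you use. One minor imprecision: conditioned on $U_1=u_1$, case (i) is not literally case (ii) up to an input relabeling (there $T$ is correlated with $U_2$, the input to both uses, rather than with $U_1$), but since your core two-inequality hybrid estimate is carried out directly in both settings, this does not affect the validity of the proof.
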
%

\begin{proof}
To prove (i), we need to show that $I(T;Y_1Y_2U_1) \le I(T;Z_1Z_2U_1)$
for all random variables $(T,U_1^2,Y_1^2,Z_1^2)$ that are jointly
distributed as
\begin{align}
\label{eqn:plus-markov}
\begin{split}
p(t,u_1^2&,y_1^2,z_1^2)\\
&=p(t,u_2)W^{+}(y_1^2,u_1|u_2)V^{+}(z_1^2,u_1|u_2).
\end{split}
\end{align}
Note that the channels $W^+$ and $V^+$ here share an output, namely
$U_1$, but this does not affect the mutual informations in question.
This assumption will simplify the proof.  Define $X_1=U_1+U_2$ and
$X_2=U_2$ (see Figure~\ref{fig:markov}).  We have
\begin{figure}[t]
\centerline{
\scalebox{1} 
{
\begin{pspicture}(0,-0.83921874)(5.4090624,0.83921874)
\psline[linewidth=0.04cm](1.71,0.42578125)(2.21,0.42578125)
\pscircle[linewidth=0.04,dimen=outer](2.41,0.42578125){0.2}
\psline[linewidth=0.04cm](2.61,0.42578125)(3.11,0.42578125)
\psline[linewidth=0.04cm](2.41,0.5257813)(2.41,0.32578126)
\psline[linewidth=0.04cm](2.31,0.42578125)(2.51,0.42578125)
\psline[linewidth=0.04cm](2.41,0.22578125)(2.41,-0.41421875)
\psline[linewidth=0.04cm](1.71,-0.41421875)(3.11,-0.41421875)
\psline[linewidth=0.04cm,linestyle=dotted,dotsep=0.04cm](3.73,0.42578125)(4.35,0.62578124)
\usefont{T1}{ppl}{m}{n}
\rput(4.7745314,0.6357812){$Y_1$}
\usefont{T1}{ppl}{m}{n}
\rput(4.7945313,-0.18421875){$Y_2$}
\usefont{T1}{ppl}{m}{n}
\rput(3.3845313,0.41578126){$X_1$}
\usefont{T1}{ppl}{m}{n}
\rput(3.3845313,-0.42421874){$X_2$}
\usefont{T1}{ppl}{m}{n}
\rput(1.4445312,0.39578125){$U_1$}
\usefont{T1}{ppl}{m}{n}
\rput(1.4445312,-0.44421875){$U_2$}
\psline[linewidth=0.04cm,linestyle=dotted,dotsep=0.04cm](0.53,-0.41421875)(1.03,-0.41421875)
\usefont{T1}{ppl}{m}{n}
\usefont{T1}{ppl}{m}{n}
\rput(0.27453125,-0.38421875){$T$}
\usefont{T1}{ppl}{m}{n}
\rput(4.8145313,0.21578126){$Z_1$}
\psline[linewidth=0.04cm,linestyle=dotted,dotsep=0.04cm](3.73,0.40578124)(4.35,0.20578125)
\psline[linewidth=0.04cm,linestyle=dotted,dotsep=0.04cm](3.73,-0.39421874)(4.35,-0.19421875)
\psline[linewidth=0.04cm,linestyle=dotted,dotsep=0.04cm](3.73,-0.41421875)(4.35,-0.6142188)
\usefont{T1}{ppl}{m}{n}
\rput(4.8145313,-0.6042187){$Z_2$}
\end{pspicture} 
}}
\caption{
\label{fig:markov}
Dependence graph of the random variables in~\eqref{eqn:plus-markov}.
}
\end{figure}
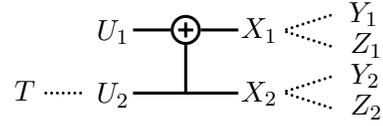
\begin{align*}
I(T;Y_1Y_2U_1)&=I(T;Y_1Y_2|U_1)\\
&= I(T;Y_1|U_1) + I(T;Y_2|Y_1U_1)\\
&\le I(T;Y_1|U_1) + I(T;Z_2|Y_1U_1)\\
&=I(T;Y_1Z_2|U_1)\\
&= I(T;Z_2|U_1) + I(T;Y_1|Z_2U_1)\\
&\le I(T;Z_2|U_1) + I(T;Z_1|Z_2U_1)\\
&=I(T;Z_1Z_2|U_1).
\end{align*}
To see the first inequality, note that 
$$
TU_1U_2X_1Y_1Z_1\text{---}X_2\text{---}Y_2Z_2
$$
is a Markov chain. Therefore we have
\begin{align*}
&p(t,u_2,x_1,x_2,y_2,z_1,z_2|y_1,u_1)\\
&\hspace{2em} = p(x_2)p(y_2,z_2|x_2)\frac{
p(t,u_1,u_2,x_1,y_1,z_1|x_2) }
	{p(y_1,u_1)}.
\end{align*}
That is, conditioned on $Y_1=y_1$ and $U_1=u_1$,
$$TU_2X_1Z_1\text{---}X_2\text{---}Y_2Z_2
$$ 
is a Markov chain, and therefore so is $T$---$X_2$---$Y_2Z_2$.  This
and the less noisiness of $V$ imply $I(T;Y_2|Y_1 = y_1, U_1 = u_1) \le
I(T;Z_2|Y_1 = y_1, U_1 = u_1)$. Averaging over $(y_1,u_1)$ yields the
first inequality.  Similarly, for the second inequality, note that
$$
TU_1U_2Y_2Z_2\text{---}X_1\text{---}Y_1Z_1
$$
is a Markov chain, and therefore so is $T$---$X_1$---$Y_1Z_1$ for
every $Z_2=z_2$ and $U_1=u_1$.  The less noisy relation then implies
$I(T;Y_1|Z_2=z_2,U_1=u_1)\le I(T;Z_1|Z_2=z_2,U_1=u_1)$.  Averaging
over $(z_2,u_1)$ yields the inequality.

To prove~(ii), we need to show that $I(T;Y_1Y_2)\le I(T; Z_1Z_2)$
for all $(T,U_1,Y_1^2,Z_1^2)$ for which
\begin{align}
\label{eqn:minus-proof}
\begin{split}
p(t,u_1,y_1^2)&=q(t,u_1)W^-(y_1^2|u_1)\\
p(t,u_1,z_1^2)&=q(t,u_1)V^-(z_1^2|u_1)
\end{split}
\end{align}
We will also define a random variable $U_2$ such that
$(T,U_1^2,Y_1^2,Z_1^2)$ is jointly distributed as
\begin{align}
\label{eqn:minus-markov}
\begin{split}
p(t,u_1^2,y_1^2,z_1^2)
&=\tfrac12 q(t,u_1)W(y_1|u_1+u_2)W(y_2|u_2)\\
	&\qquad\cdot V(z_1|u_1+u_2)V(z_2|u_2).
\end{split}
\end{align}
Observe that this definition is consistent
with~\eqref{eqn:minus-proof}, it will simplify the proof.  Defining
again $X_1=U_1+U_2$ and $X_2=U_2$ (see Figure~\ref{fig:markov2}), we
can write
\begin{align*}
I(T;Y_1Y_2) &= I(T;Y_1) + I(T;Y_2|Y_1)\\
&\le I(T;Y_1) + I(T;Z_2|Y_1)\\
&= I(T;Z_2) + I(T;Y_1|Z_2)\\
&\le I(T;Z_2) + I(T;Z_1|Z_2)\\
&= I(T;Z_1Z_2).
\end{align*}
To see the first inequality, note that the distribution
in~\eqref{eqn:minus-markov} implies that
$$
TU_1U_2X_1Y_1Z_1\text{---}X_2\text{---}Y_2Z_2
$$
is a Markov chain. Therefore we have
\begin{align*}
 &p(t,u_1,u_2,x_1,x_2,y_2,z_1,z_2|y_1)\\ 
&\hspace{2em}=p(x_2)p(y_2,z_2|x_2)\frac{p(t,u_1,u_2,x_1,y_1)}{p(y_1)}
.
\end{align*}
That is, for any fixed value of $Y_1$,
$$
TU_1U_2X_1Z_1\text{---}X_2\text{---}Y_2Z_2
$$
is a Markov chain, and therefore so is $T$---$X_2$---$Y_2Z_2$. This and
the less noisiness of $V$ imply $I(T;Y_2|Y_1 = y_1) \le I(T;Z_2|Y_1 =
y_1)$.  Averaging over $y_1$ yields the first inequality. The proof of
the second inequality follows by similar arguments.
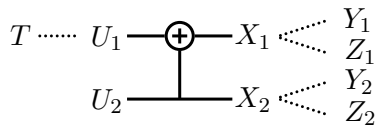
\begin{figure}[t]
\centerline{
\scalebox{1} 
{
\begin{pspicture}(0,-0.83921874)(5.4090624,0.83921874)
\psline[linewidth=0.04cm](1.71,0.42578125)(2.21,0.42578125)
\pscircle[linewidth=0.04,dimen=outer](2.41,0.42578125){0.2}
\psline[linewidth=0.04cm](2.61,0.42578125)(3.11,0.42578125)
\psline[linewidth=0.04cm](2.41,0.5257813)(2.41,0.32578126)
\psline[linewidth=0.04cm](2.31,0.42578125)(2.51,0.42578125)
\psline[linewidth=0.04cm](2.41,0.22578125)(2.41,-0.41421875)
\psline[linewidth=0.04cm](1.71,-0.41421875)(3.11,-0.41421875)
\psline[linewidth=0.04cm,linestyle=dotted,dotsep=0.04cm](3.73,0.42578125)(4.35,0.62578124)
\usefont{T1}{ppl}{m}{n}
\rput(4.7745314,0.6357812){$Y_1$}
\usefont{T1}{ppl}{m}{n}
\rput(4.7945313,-0.18421875){$Y_2$}
\usefont{T1}{ppl}{m}{n}
\rput(3.3845313,0.41578126){$X_1$}
\usefont{T1}{ppl}{m}{n}
\rput(3.3845313,-0.42421874){$X_2$}
\usefont{T1}{ppl}{m}{n}
\rput(1.4445312,0.39578125){$U_1$}
\usefont{T1}{ppl}{m}{n}
\rput(1.4445312,-0.44421875){$U_2$}
\psline[linewidth=0.04cm,linestyle=dotted,dotsep=0.04cm](0.53,0.42578125)(1.03,0.42578125)
\usefont{T1}{ppl}{m}{n}
\rput(0.29453126,0.43578124){$T$}
\usefont{T1}{ppl}{m}{n}
\usefont{T1}{ppl}{m}{n}
\rput(4.8145313,0.21578126){$Z_1$}
\psline[linewidth=0.04cm,linestyle=dotted,dotsep=0.04cm](3.73,0.40578124)(4.35,0.20578125)
\psline[linewidth=0.04cm,linestyle=dotted,dotsep=0.04cm](3.73,-0.39421874)(4.35,-0.19421875)
\psline[linewidth=0.04cm,linestyle=dotted,dotsep=0.04cm](3.73,-0.41421875)(4.35,-0.6142188)
\usefont{T1}{ppl}{m}{n}
\rput(4.8145313,-0.6042187){$Z_2$}
\end{pspicture} 
}}
\caption{
\label{fig:markov2}
Dependence graph of the random variables in~\eqref{eqn:minus-markov}.
}
\end{figure}
\end{proof}

Note that the choice of the polarization transform and the alphabet
size are immaterial to the proof above, and thus the result holds in
more generality as long as the polarized channels are appropriately
defined. 

It is of interest to characterize the weakest relations that are
preserved under polarization.  For example, the \emph{more capable}
relation~\cite{KM1977}, which is weaker than the less noisy relation,
is not preserved.  A channel $V(z|x)$ is said to be more capable than
$W(y|x)$ if $I(X;Y) \le I(X;Z)$ for all $p(x,y,z)=p(x)W(y|x)V(z|x)$.
It is shown in~\cite[Lemma~7.1]{Sasoglu2011} that in the class of
symmetric binary-input channels with a given capacity, the binary
symmetric channel is the least capable.  However, the `minus' version
of the binary symmetric channel has a larger capacity than the minus
versions of all other channels in this class, and therefore is not
less capable than any such channel.  See, for
example,~\cite[Lemma~2.1]{Sasoglu2011} for a proof.

\section{Appendix B}
\begin{lemma}

For every $x\in[0,1]$, the function
$$
g_x(t)=h(h^{-1}(t)*h^{-1}(2x-t))
$$
defined over $t\in[\max\{0,2x-1\},x]$ is decreasing.
\end{lemma}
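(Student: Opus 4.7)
The plan is to compute $g_x'(t)$ directly and reduce the sign question to a single-variable monotonicity statement. Writing $a = h^{-1}(t)$ and $b = h^{-1}(2x-t)$, the domain $t \in [\max\{0,2x-1\},x]$ ensures $t, 2x-t \in [0,1]$, so $a,b \in [0,1/2]$; moreover $t \le x$ gives $a \le b$. Using $h'(p) = \log\frac{1-p}{p}$, the inverse function rule gives $da/dt = 1/h'(a)$ and $db/dt = -1/h'(b)$. Expanding $a*b = a + b - 2ab$ and applying the chain rule, I would arrive at
$$
g_x'(t) = h'(a*b)\left[\frac{1-2b}{h'(a)} - \frac{1-2a}{h'(b)}\right].
$$

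Next, I would note that $a, b \in [0,1/2]$ forces $a*b \in [0,1/2]$ (indeed $a*b = a + b(1-2a) \le a + \tfrac12(1-2a) = \tfrac12$), so $h'(a*b) \ge 0$, and similarly $h'(a), h'(b) \ge 0$. Hence $g_x'(t) \le 0$ iff $(1-2b)h'(b) \le (1-2a)h'(a)$. Since $a \le b$, it suffices to show that the univariate function
$$
\phi(p) = (1-2p)\,h'(p) = (1-2p)\log\tfrac{1-p}{p}
$$
is non-increasing on $(0,1/2]$.

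The final step is a one-variable calculus check:
$$
\phi'(p) = -2\log\tfrac{1-p}{p} - \frac{1-2p}{p(1-p)},
$$
and both terms are non-positive for $p \in (0,1/2]$ because $\log\frac{1-p}{p} \ge 0$ and $1-2p \ge 0$ there. So $\phi' \le 0$, giving $\phi$ non-increasing and therefore $g_x'(t) \le 0$ on the interior of the interval.

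The main obstacle I anticipate is not the core inequality but the boundary behavior: at $t = \max\{0,2x-1\}$ one has $b = 1/2$, and at $t = x$ one has $a = b$; both cases cause ratios of the form $0/0$ or $\infty/\infty$ in the expression for $g_x'(t)$. I would handle these by a continuity/limiting argument, noting that $g_x$ itself is continuous on the closed interval, so interior monotonicity extends to the endpoints. A secondary sanity check is to verify the reduction step when $a = 0$ (so $h'(a) = \infty$); here the quotient $(1-2b)/h'(a)$ vanishes, consistent with the claim.
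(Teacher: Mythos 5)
Your proof is correct and takes essentially the same route as the paper's: differentiate and show the derivative is non-positive, which after clearing the positive factors reduces to exactly the inequality $(1-2b)h'(b)\le(1-2a)h'(a)$ that the paper writes as $f'(t)[1-2f(2x-t)]\le f'(2x-t)[1-2f(t)]$ with $f=h^{-1}$. The only cosmetic difference is the finishing step—the paper compares the two products factor-by-factor using convexity of $f$ and $f\le 1/2$, whereas you establish monotonicity of $\phi(p)=(1-2p)h'(p)$ by a second differentiation—and your continuity remark for the endpoint degeneracies correctly handles what the paper leaves implicit.
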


\begin{proof} The function $h$ is monotonically increasing over $[0,1/2]$. So
it suffices to show $k_x(t) := h^{-1}(t) * h^{-1}(2x-t)$ is decreasing in $t$.
Defining $f=h^{-1}$, some algebra yields
\begin{align*}
&\tfrac{d}{dt}k_x(t)\\
	&=f'(t)[1-2f(2x-t)] 
	- f'(2x-t)[1-2f(t)].
\end{align*}
The right-hand-side of the above is non-positive since
$$
f'(t) \le f'(2x-t) \text{ and } f(t) \le f(2x-t)\le1/2,
$$
which follow from $f$ being convex, increasing, and $[0,1/2]$-valued.
%
\end{proof}

\section*{Acknowledgement}
We thank S.~H.~Hassani and R.~Urbanke for stimulating discussions.
E.~\c Sa\c so\u glu's work is supported by the Swiss National Science
Foundation under grant PBELP2\_137726.  L.~Wang's work is funded by
the 2013 Qualcomm Innovation Fellowship.
\bibliographystyle{IEEEtran}
\bibliography{univ}

\end{document}